\def\ps@headings{%
\def\@oddhead{\mbox{}\scriptsize\rightmark \hfil \thepage}%
\def\@evenhead{\scriptsize\thepage \hfil \leftmark\mbox{}}%
\def\@oddfoot{}%
\def\@evenfoot{}}
\newtheorem{theorem}{Theorem}
\newcommand{\etal}{\emph{et al.}\xspace}
\begin{document}
%
\title{CapEst: A Measurement-based Approach to Estimating Link Capacity in Wireless Networks}


\author{\IEEEauthorblockN{Apoorva Jindal}
\IEEEauthorblockA{University of Michigan\\
Ann Arbor, MI 48103 \\
Email: apoorvaj@umich.edu}
\and
\IEEEauthorblockN{Konstantinos Psounis}
\IEEEauthorblockA{University of Southern California\\
Los Angeles, CA 90089\\
Email: kpsounis@usc.edu}
\and
\IEEEauthorblockN{Mingyan Liu}
\IEEEauthorblockA{University of Michigan\\
Ann Arbor, MI 48103\\
Email: mingyan@eecs.umich.edu}
}


%


\maketitle

\begin{abstract}
Estimating link capacity in a wireless network is a complex task because the available
capacity at a link is a function of not only the current arrival rate at that link, 
but also of the arrival rate at links which interfere with that link as well as of the nature of interference between these links.
Models which accurately characterize this dependence are either too computationally complex to be useful
or lack accuracy. Further, they have a high implementation overhead and make restrictive assumptions, which makes them inapplicable to real networks.

In this paper, we propose CapEst, a general, simple yet accurate, measurement-based approach to estimating link capacity in a wireless network.
To be computationally light, CapEst allows inaccuracy in estimation; however, using measurements, it can correct this inaccuracy 
in an iterative fashion and converge to the correct estimate. 
Our evaluation shows that CapEst always converged to within $5\%$ of the correct value in less than $18$ iterations.
CapEst is model-independent, hence, is applicable to any MAC/PHY layer and works with auto-rate adaptation. Moreover, it
has a low implementation overhead, can be used with any application which requires an estimate of residual capacity on a wireless link
and can be implemented completely at the network layer without any support from the underlying chipset.

\end{abstract}

\IEEEpeerreviewmaketitle

\section{Introduction}

The capacity of a wireless link is defined to be the maximum sustainable data arrival rate at that link.
Estimating the residual capacity of a wireless link is an important problem
because knowledge of available capacity is needed by several different tools and applications, including 
predicting safe sending rates of various flows based on policy and path capacity~\cite{qiu:sigcomm},
online optimization of wireless mesh networks using centralized rate control~\cite{ramesh:conext},
distributed rate control mechanisms which provide explicit and precise rate feedback to sources~\cite{our:mobicom}, 
admission control, interference-aware routing~\cite{sigmetrics:802.11}, network management tools to
predict the impact of configuration changes~\cite{qiu:sigcomm} etc. 

However, estimating residual link capacity in a wireless network, especially a multi-hop network, is a hard problem because the available
capacity is a function of not only the current arrival rate at the link under consideration, 
but also of the arrival rates at links which interfere with that link and the underlying topology. 
Models which accurately represent this dependence are very complex and computationally heavy and, as input, require the complete topology
information including which pair of links interfere with each other, the capture and deferral probabilities between each pair of links, 
the loss probability at each link, etc~\cite{sigmetrics:802.11,apoorva:ton,mobicom:complete,qiu:sigcomm,infocom:802.11}. 
Simpler models make simplifying assumptions~\cite{ramesh:conext,Thiran:802.11,kashyap:802.11,reis_sigcomm06,wlan:infocom} 
which diminish their accuracy in real networks. Moreover, model-based
capacity estimation techniques~\cite{qiu:sigcomm,ramesh:conext} work only
for the specific MAC/PHY layer for which they were designed and extending them to a new MAC/PHY layer requires building
a new model from scratch. Finally, none of these methods work with auto-rate adaptation at the MAC layer, which makes them 
inapplicable to any real network.

In this paper, we propose CapEst, a general, simple yet accurate and model-independent, measurement-based approach to estimating link capacity in a wireless network. 
CapEst is an iterative mechanism. During each iteration, 
each link maintains an estimate of the expected service time per packet on that link, and uses this estimate to predict the residual capacity on that link.
This residual capacity estimate may be inaccurate. However, CapEst will progressively improve its estimate with each iteration, and eventually converge to the correct capacity value. 
Our evaluation of CapEst in Sections~\ref{sec:convergence} and~\ref{sec:sims}
shows that, first, CapEst converges, and secondly, CapEst converges to within $5\%$ the correct estimate in less than $18$ iterations.
Note that one iteration involves the exchange of $200$ packets on each link, which lasts less than $280$ms for a link suffering from no interference.
(See Section~\ref{sec:sims} for more details.)

Based on the residual capacity estimate, CapEst predicts the constraints imposed by the network on the rate changes at other links.
CapEst can be used with any application because the application can use it to predict the residual capacity estimate at each link,
and behave accordingly. By a similar argument, CapEst can be used by any network operation to properly allocate resources.
As a show-case, in this paper, we use CapEst for online optimization of wireless mesh networks using centralized rate control.
Later, Section~\ref{sec:applications} briefly illustrates how CapEst is used with two other applications: distributed rate allocation and interference-aware routing.

The properties which makes CapEst unique, general and useful in many scenarios are as follows.
(i) It is simple and requires no complex computations, and yet yields accurate estimates.
(ii) It is model-independent, hence, can be applied to any MAC/PHY layer. 
(iii) The only topology information it requires is which node interferes with whom, which can be easily collected locally with low overhead~\cite{ramesh:conext}.
(iv) It works with auto-rate adaptation. 
(v) It can be completely implemented at the network layer and requires no additional support from the chipset.
(vi) CapEst can be used with any application which requires an estimate of wireless link capacity, and on any wireless network, whether single-hop or multi-hop.

\section{Related Work}
\label{sec:related}

\noindent {\bf Model-based Capacity Estimation:} 
Several researchers have proposed models for IEEE 802.11 capacity/throughput estimation in multi-hop networks~\cite{sigmetrics:802.11,apoorva:ton,mobicom:complete,qiu:sigcomm,infocom:802.11,reis_sigcomm06,tobagi:infocom,kashyap:802.11,wlan:infocom,Thiran:802.11,wang:infocom,ramesh:conext}.
%
Model-based capacity estimation suffers from numerous disadvantages. 
They are tied to the model being used, hence, cannot be extended easily to other scenarios, like a different MAC/PHY layer.
Moreover, incorporating auto-rate adaptation would make the models prohibitively complicated, so these techniques do not work with this MAC feature.
Finally, and most importantly, the models used tend to be either very complex and require complete topological information~\cite{sigmetrics:802.11,apoorva:ton,mobicom:complete,qiu:sigcomm,infocom:802.11}, 
or make simplifying assumptions on traffic~\cite{reis_sigcomm06,tobagi:infocom}, topology~\cite{kashyap:802.11,wlan:infocom} 
or the MAC layer~\cite{Thiran:802.11,wang:infocom,ramesh:conext} which significantly reduces their accuracy in a real network.
Finally, there exists a body of work which studies throughput prediction in wireless networks with optimal scheduling~\cite{jain:mobicom}.
If these are applied directly to predicting rates with any other MAC protocol, it will lead to considerably overestimating capacity. 

\noindent {\bf Congestion/Rate Control:} Numerous congestion control approaches have been proposed to regulate rates at the transport layer,
for example, see~\cite{ATP,our:mobicom,nred,warrier:infocom,congestion:journal,stolyarinfo08} and references therein. These approaches either use message exchanges between interfering nodes
or use back-pressure. All these approaches improve fairness/throughput; however, they do not allow the specification of a well-defined throughput fairness
objective (like max-min or proportional fairness). Moreover, most of them require changes to either IEEE 802.11 MAC or TCP. Our work is complementary to these works. 
We focus on how to accurately estimate capacity irrespective of the MAC or the transport layer. 
This capacity estimate, amongst many applications, can also be used to build a rate control protocol which allows
the intermediate routing nodes to provide explicit and precise rate feedback to source nodes (like XCP or RCP in wired networks).
It can also be used to set long-term rates while retaining TCP, similar to the model-based capacity estimation
techniques of~\cite{qiu:sigcomm} and~\cite{ramesh:conext}, to get better end-to-end throughputs.

\noindent {\bf Asymptotic Capacity Bounds:} An orthogonal body of work studies asymptotic capacity bounds in multi-hop networks~\cite{Kumar:capacity,Tse:Mobility,vetterli:relay}.
While these works lend useful insight into the performance of wireless networks in the limit, their models are abstract by necessity and
cannot be used to estimate capacity in any specific real network.

\section{{CapEst} Description}
\label{sec:capest}

CapEst is an iterative mechanism. During each iteration, each link measures the expected service time per packet on that link,
and uses this measurement to estimate the residual capacity on that link. The application, which for this section is centralized rate 
allocation to get max-min fairness, uses the estimated residual capacity to allocate flow-rates. 
The estimate may be inaccurate, however, it will progressively improve with each iteration, and eventually converge to the correct value.
This section describes each component of CapEst in detail and the max-min fair centralized rate allocator.

We first define our notations. Let $V$ denote the set of wireless nodes. A link $i \rightarrow j$ is described by the 
transmitter-receiver pair of nodes $i,j \in V, i \neq j$. Let $\mathcal{L}$ denote the set of active links. 
Let $\lambda_{i \rightarrow j}$ denote the packet arrival rate at link $i \rightarrow j$.
Let $N_{i \rightarrow j}$ denote the set of links which interfere with $i \rightarrow j$, where a link $k \rightarrow l$ is defined to interfere
with link $i \rightarrow j$ if and only if either $i$ interferes with node $k$ or $l$, or $j$ interferes with node $k$ or $l$. 
For convenience, $i \rightarrow j \in N_{i \rightarrow j}$. ($N_{i \rightarrow j}$ is also referred to as the neighborhood of $i \rightarrow j$~\cite{our:mobicom}.) 

In this description, we make the following assumptions.
(i) The retransmit limit at the MAC layer is very large, so no packet is dropped by the MAC layer. 
(ii) The size of all packets is the same and the data rate at all links is the same.
Note that these assumptions are being made for ease
of presentation, and in Sections~\ref{sec:finite} and~\ref{sec:auto-rate}, we present modifications to CapEst to incorporate finite retransmit limits and different
packet sizes and data rates respectively.

\subsection{Estimating Capacity}
%
We first describe how each link estimates its expected service time. 
For each successful packet transmission, CapEst measures the time elapsed between the MAC layer receiving the packet 
from the network layer, and the MAC layer informing the network layer that the packet has been successfully transmitted. This denotes the service time of that packet.
Thus, CapEst is completely implemented in the network layer.
CapEst maintains the value of two variables $\overline{S}_{i \rightarrow j}$ and $K_{i \rightarrow j}$\footnote{We could have used an 
exponentially weighted moving average to estimate $\overline{S}_{i \rightarrow j}$ too. However, since each packet is served by the same service process, giving equal weight to
each packet in determining $\overline{S}_{i \rightarrow j}$ should yield better estimates, which we verify through simulations.}, which denote the estimated expected service time and a counter
to indicate the number of packets over which the averaging is being done respectively, at each link $i \rightarrow j$.
For each successful packet transmission, if $S_{i \rightarrow j}^{last}$ denotes the service time of the most recently transmitted packet, 
then the values of these two variables are updated as follows.
\begin{eqnarray} 
\label{eqn:e1} & & \overline{S}_{i \rightarrow j} \leftarrow \frac{\overline{S}_{i \rightarrow j} \times K_{i \rightarrow j} + S_{i \rightarrow j }^{last}}{K_{i \rightarrow j} + 1} \\
\label{eqn:e2} & & K_{i \rightarrow j} \leftarrow K_{i \rightarrow j} + 1. 
\end{eqnarray}

$1/\overline{S}_{i \rightarrow j}$ gives the MAC service rate at link
$i \rightarrow j$. Thus, the residual capacity on link $i \rightarrow j$ is equal to $\left(1/\overline{S}_{i \rightarrow j}\right) - \lambda_{i \rightarrow j}$.
Now, since transmissions on neighboring links will also eat up the capacity at link $i \rightarrow j$, this residual capacity will be
distributed amongst all the links in $N_{i \rightarrow j}$. 
Note that the application using CapEst, based on this residual capacity estimate, will either re-allocate rates or change the routing
or admit/remove flows etc, which will change the rate on the links in the network. 
However, this rate change will have to obey the following constraint at each link to keep the new rates feasible.
\begin{equation}
\label{eqn:constraint}
\sum_{k \rightarrow l \in N_{i \rightarrow j}} \delta_{k \rightarrow l} \leq 
\left( 1/\overline{S}_{i \rightarrow j} \right) - \lambda_{i \rightarrow j}, \forall i \rightarrow j \in \mathcal{L},
\end{equation}
where $\delta_{k \rightarrow l}$ denotes the rate increase at link $k \rightarrow l$ in packets per unit time. How exactly is this residual capacity divided 
amongst the neighboring links depends on the application at hand. For example, 
Section~\ref{sec:max-min-calc} describes a centralized methodology to distribute this estimated residual capacity amongst interfering links so as 
to obtain max-min fairness amongst all flows. Similarly, Section~\ref{sec:weighted} describes how to divide this estimated capacity to get weighted fairness.

Note that all interfering links in $N_{i \rightarrow j}$ do not have the same effect on the link $i \rightarrow j$. 
Some of these interfering links can be scheduled simultaneously, and some do not always interfere and packets may go through
due to capture affect (non-binary interference~\cite{mobicom:complete}). 
However, the linear constraint of Equation (\ref{eqn:constraint}) treats the links in $N_{i \rightarrow j}$
as a clique with binary interference. Hence, there may be some remaining capacity on link $i \rightarrow j$ after utilizing
Equation (\ref{eqn:constraint}) to ensure that the data arrival rates remain feasible.
CapEst will automatically improve the capacity estimate in the next iteration and converge to the correct capacity value iteratively.

We refer to the duration of one iteration as the {\it iteration duration}. At the start of the iteration, 
both the variables $\overline{S}_{i \rightarrow j}$ and $K_{i \rightarrow j}$ are initialized to $0$. We start 
estimating the expected service time afresh at each iteration because the residual capacity distribution in the previous iteration may
change the link rates at links in $N_{i \rightarrow j}$, and hence change the value of $\overline{S}_{i \rightarrow j}$. 
Thus, retaining $\overline{S}_{i \rightarrow j}$ from the previous iteration is inaccurate.
We next discuss how to set the value of the iteration duration.
Note that CapEst has no overhead; it only requires each link to determine $\overline{S}_{i \rightarrow j}$ which does not require
any message exchange between links.
This however does not imply that there is no constraint on the choice of the iteration duration.
Each link $i \rightarrow j$ has to measure $\overline{S}_{i \rightarrow j}$ afresh.
Thus, an iteration duration has to be long enough so as to accurately measure $\overline{S}_{i \rightarrow j}$.
However, we cannot make the iteration duration too long as it directly impacts the convergence time of CapEst. 
Moreover, the application which will distribute capacity will have overhead as it will need message exchanges to ensure Equation (\ref{eqn:constraint}) holds.
The choice of the value of the iteration duration is further discussed in Section~\ref{sec:sims}.

\subsection{Distributing Capacity to Obtain Max-Min Fairness}
\label{sec:max-min-calc}
To illustrate how CapEst will be used with a real application, we now describe a centralized mechanism to allocate max-min fair rates to flows.

Each link determines its residual capacity through CapEst and conveys this information to a centralized rate allocator. 
This centralized allocator is also aware of which pair of links in $\mathcal{L}$ interfere with each other as well as
the routing path of each flow. Let $\mathcal{F}$ denote the set of end-to-end flows characterized by their source-destination pairs.
Let $r_f$ denote the new flow-rate determined by the centralized rate allocator, and let $r_{i \rightarrow j}^{allocate}$ denote the maximum flow rate
allowed on link $i \rightarrow j$ for any flow passing through this link.  

Consider a link $i \rightarrow j$. Let $I(f, i \rightarrow j)$ be an indicator variable which is equal to $1$ only if flow $f \in \mathcal{F}$ passes through a link
$i \rightarrow j \in \mathcal{L}$, otherwise it is equal to $0$. 
Based on the residual capacity estimates, the centralized allocator updates the value of $r_{f}, f \in \mathcal{F}$ according to the following set of equations.
\begin{eqnarray}
& & r_{i \rightarrow j}^{max} \leftarrow r_{i \rightarrow j}^{allocate} + \frac{1/\overline{S}_{i \rightarrow j} - \lambda_{i \rightarrow j}}{\sum_{k \rightarrow l \in N_{i \rightarrow j}} \sum_{f \in \mathcal{F}} I(f, k \rightarrow l)}  \nonumber \\
& & r_{i \rightarrow j}^{allocate} \leftarrow \mbox{min}_{k \rightarrow l \in N_{i \rightarrow j}} r_{k \rightarrow l}^{max}  \nonumber \\
& & r_f \leftarrow \mbox{min}_{i \rightarrow j \in P_f} r_{i \rightarrow j}^{allocate}, 
\end{eqnarray}
where $r_{i \rightarrow j}^{max}$ denotes the maximum flow-rate allowed 
by link $i \rightarrow j$ on any link in $N_{i \rightarrow j}$ and the set $P_f$ contains the links lying on the routing path of flow $f$.
Thus, amongst the links a flow traverses, its rate is updated according to the link with the minimum residual capacity in its neighborhood.
Note that the residual capacity estimate may be negative, which will merely result in reducing the value of $r_{f}$.

This mechanism will allocate equal rates to flows which pass through the neighborhood of the same bottleneck link.
According to the proof presented in~\cite{wcp:ton}, for CSMA-CA based MAC protocols, this property ensures max-min fairness. 

\subsection{Finite Retransmit Limits}
\label{sec:finite}
A packet may be dropped at the MAC layer if the number of retransmissions exceeds the maximum retransmit limit. 
Since this packet was dropped without being serviced, what is its service time? Should we ignore this packet and not change the current estimate
of the expected service time, or, should we merely take the duration for which the packet was in the MAC layer and use it as a measure of its service time?
Note that these lost packets may indicate that the link is suffering from severe interference, and hence, imply that flows passing through the neighborhood 
of this link should reduce their rates. Ignoring lost packets is thus not the correct approach. Moreover, merely using the duration the lost packet spent in the MAC
layer as the packet's service time is not sufficient to increase the value of the expected service time by an amount which leads to a negative residual capacity.

We use the following approach. We define the service time of a lost packet to be equal to the sum of the duration spent by the packet in the MAC layer and the expected 
additional duration required by the MAC layer to service the packet if it was not dropped. Assuming independent losses\footnote{A more complex model
accounting for correlated losses can be easily incorporated, however, our simulation results presented in Section~\ref{sec:sim_finite} show that
this simple independent loss model yields accurate results.}, the latter term is equal to $\frac{W_m/2 + T_s}{1 - p_{i \rightarrow j}^{loss}}$,
where $W_m$ is the largest back-off window value, $T_s$ is the transmission time of a packet and $p_{i \rightarrow j}^{loss}$ is the probability that a 
DATA transmission on link $i \rightarrow j$ is not successful. This value can be directly monitored at the network layer by keeping a running ratio of the
number of packets lost to the number of packets sent to the MAC layer for transmission\footnote{Actually, the correct approach is to use the probability of loss seen at the PHY 
layer, but measuring this quantity requires support from the chipset firmware. If the firmware supports measuring this loss probability, it should be used.}
given that each lost packet is transmitted as many times as the MAC retransmit limit.

\subsection{Differing Packet Sizes and Data Rates}
\label{sec:auto-rate}
The methodology to estimate the expected service time (Equations (\ref{eqn:e1}) and (\ref{eqn:e2})) remains the same. 
What changes is how to distribute this residual capacity amongst neighboring links, which is governed by the constraint of Equation (\ref{eqn:constraint}).
The objective of this constraint is to ensure that the sum of the increase in the proportion of time a link $k \rightarrow l \in N_{i \rightarrow j}$ transmits
should be less than the proportion of time the channel around $i$ and $j$ is empty.
Thus, if each link has a different transmission time, then the increase in rates on $k \rightarrow l \in N_{i \rightarrow j}$ as well as the residual capacity
on $i \rightarrow j$ has to be scaled so as to represent the increase in airtime being consumed and the idle airtime around $i$ and $j$ respectively.
Hence, if $\overline{T}_{i \rightarrow j}$ represents the average packet transmission time at link $i \rightarrow j$, then the capacity
estimation mechanism will impose the following constraint. 
\begin{equation}
\label{eqn:new_constraint}
\sum_{k \rightarrow l \in N_{i \rightarrow j}} \delta_{k \rightarrow l} \frac{\overline{T}_{k \rightarrow l}}{\overline{T}_{i \rightarrow j}} \leq 
\left( 1/\overline{S}_{i \rightarrow j} \right) - \lambda_{i \rightarrow j}, \forall i \rightarrow j \in \mathcal{L}.
\end{equation}
Equation (\ref{eqn:new_constraint}) merely normalizes rates to airtime.
Note that Equation (\ref{eqn:constraint}) is a special case of Equation (\ref{eqn:new_constraint}) if the packet sizes and data rates at all links are the same.

Finally, the following equation states how the value of $\overline{T}_{i \rightarrow j}$ is estimated.
\begin{equation}
\label{eqn:est_t}
\overline{T}_{i \rightarrow j} \leftarrow \frac{\overline{T}_{i \rightarrow j} \times K_{i \rightarrow j} + \frac{P_{i \rightarrow j}^{last}}{D_{i \rightarrow j}^{last}}}{K_{i \rightarrow j} + 1},
\end{equation}
where $P_{i \rightarrow j}^{last}$ and $D_{i \rightarrow j}^{last}$ denote the packet size of the last packet transmitted and the data rate used to transmit
the last packet respectively for link $i \rightarrow j$. 

\section{Convergence of {CapEst}}
\label{sec:convergence}

We analyze the convergence of CapEst with the max-min fair centralized rate allocator under a WLAN topology, where all nodes can hear each other, 
all nodes are homogeneous, and each link has the same packet arrival rate. 
Later, through simulations, we show that, with general multi-hop topologies, with fading and shadowing, heterogeneous nodes
and different arrival rates for each link, our results still hold. 
\begin{theorem}
\label{thm:convergence}
Let $\lambda_{fin}$ denote the arrival rate at each edge such that the expected service time at each edge is equal to $1/\lambda_{fin}$.
For a WLAN topology where all nodes can hear each other, with homogeneous nodes having small buffers, and with each link having the same packet arrival rate, 
$\lambda_{fin}$ always exists and is unique, and CapEst, with the max-min fair centralized rate allocator, 
always converges to $\lambda_{fin}$.
\end{theorem}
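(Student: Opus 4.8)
The plan is to exploit the full symmetry of the WLAN clique to collapse the coupled per-link dynamics into a one-dimensional fixed-point iteration, and then treat existence/uniqueness and convergence as separate questions about that scalar map. First I would reduce to a scalar recursion. Since every node hears every other node, all links interfere, so $N_{i \rightarrow j} = \mathcal{L}$ for every link and the neighborhood is a clique of size $L = |\mathcal{L}|$. Because the nodes are homogeneous and every link carries the same arrival rate, by symmetry all links share a common offered load $\lambda_n$ and a common measured service time in iteration $n$; write $S(\lambda)$ for the expected per-packet service time when every link offers load $\lambda$. A short bookkeeping check shows that the denominator $\sum_{k \rightarrow l \in N_{i \rightarrow j}} \sum_{f} I(f, k \rightarrow l)$ in the max-min update of Section~\ref{sec:max-min-calc} equals $L$ times the (common) number of flows per link, while the per-link arrival rate aggregates exactly that many flows, so the per-flow increment multiplies back up to give a clean per-link recursion
\[ \lambda_{n+1} = \lambda_n + \frac{1}{L}\left(\frac{1}{S(\lambda_n)} - \lambda_n\right) =: F(\lambda_n). \]

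Next I would record the structural properties of the service-time curve that drive everything. On general grounds, more offered load on interfering links means more contention and longer backoff before a successful transmission, so $S(\cdot)$ is positive, continuous and strictly increasing on $[0,\lambda_{sat})$, with $S(0)>0$ finite and $S(\lambda)\to\infty$ as $\lambda$ approaches the saturation load $\lambda_{sat}$. The small-buffer assumption is exactly what lets me treat the measured $\overline{S}_{i \rightarrow j}$ as a deterministic function of the offered load alone, decoupled from queue build-up. Consequently the residual-capacity function $h(\lambda) := 1/S(\lambda) - \lambda$ is continuous and strictly decreasing, with $h(0)>0$ and $h(\lambda)<0$ near saturation. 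Existence and uniqueness of $\lambda_{fin}$ then follow immediately from the intermediate value theorem: $h$ has exactly one zero, and that zero is by definition $\lambda_{fin}$, the load at which $1/S(\lambda)=\lambda$. Since the fixed points of $F$ are precisely the zeros of $h$, $\lambda_{fin}$ is the unique fixed point of the recursion.

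For convergence I would show that the recursion drives $\lambda_n \to \lambda_{fin}$ from any feasible start. Because $h$ is strictly decreasing with its single root at $\lambda_{fin}$, the update $h(\lambda_n)/L$ always has the correct sign: the sequence increases while below $\lambda_{fin}$ and decreases while above it, so it is always pushed toward the fixed point. The stabilizing ingredient is the $1/L$ damping, which shrinks each step. To turn the correct-sign property into genuine convergence I would bound the local slope,
\[ F'(\lambda_{fin}) = 1 - \frac{1}{L}\bigl(1 + S'(\lambda_{fin})\,\lambda_{fin}^2\bigr), \]
which is strictly less than $1$ because $S'>0$, and which exceeds $-1$ as soon as $2L > 1 + S'(\lambda_{fin})\,\lambda_{fin}^2$, a condition comfortably satisfied by the clique sizes of interest. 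Under this bound $F$ is a contraction on a neighborhood of $\lambda_{fin}$; combining the contraction estimate with the global sign/monotonicity argument (the iterate cannot overshoot past $\lambda_{fin}$ by more than one damped step and is thereafter trapped in the contracting neighborhood) yields $\lambda_n \to \lambda_{fin}$.

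The hard part will be making the properties of $S(\cdot)$ rigorous without committing to a specific MAC model, since the whole point of CapEst is to be model-independent; I would therefore phrase the monotonicity, continuity, and blow-up of $S$ as qualitative consequences of the contention mechanism rather than derive them from a closed form. The second, related difficulty is certifying the no-overshoot/contraction condition purely from those qualitative properties — in particular bounding $S'(\lambda_{fin})\,\lambda_{fin}^2$ — rather than from an explicit expression for $S$, and arguing that the relevant clique sizes always satisfy $2L > 1 + S'(\lambda_{fin})\,\lambda_{fin}^2$ so that the damping is strong enough to prevent sustained oscillation.
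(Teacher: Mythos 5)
Your reduction of the symmetric WLAN to the scalar damped recursion $\lambda_{n+1} = \lambda_n + \frac{1}{L}\left(1/S(\lambda_n)-\lambda_n\right)$ is exactly the recursion the paper analyzes (its Equation (\ref{eqn:ru}), with $1-\alpha = 1/L$), and your intermediate-value argument for existence and uniqueness of $\lambda_{fin}$ is a legitimate, more elementary alternative to the paper's route (Brouwer's theorem for existence, plus showing that the fixed-point map in attempt-rate space is decreasing and concave). However, there are two genuine gaps, both of which you flag yourself, and they are not side issues: they are where the mathematical content of the theorem lives. The first is that every property of $S(\cdot)$ you use (continuity, strict monotonicity, enough growth that $1/S(\lambda) < \lambda$ eventually) is postulated ``on general grounds.'' The theorem is a claim about a concrete system and only becomes provable once one commits to a model of it; this is precisely why the paper adopts the small-buffer WLAN model of Zhao \etal (Equations (\ref{eqn1})--(\ref{eqn3}) together with (\ref{eqn:lambda})) and spends most of its appendix differentiating that model. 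CapEst being model-independent as a \emph{mechanism} does not make Theorem~\ref{thm:convergence} model-independent as a \emph{statement}; declining to fix a model leaves a conditional sketch rather than a proof.

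The second gap is more serious: the convergence step fails as stated. The ``correct sign'' property alone does not give convergence of $\lambda_{n+1}=F(\lambda_n)$ --- without a bound on the slope of $1/S(\cdot)$ relative to $L$, the iterates can overshoot and then oscillate with non-decaying amplitude. You patch this with the local condition $2L > 1 + S'(\lambda_{fin})\,\lambda_{fin}^2$, but (a) this is exactly the quantity your qualitative assumptions cannot control: $\lambda_{fin}$ sits at the stability boundary where arrival rate equals service rate, and nothing in the setup bounds $S'$ there in terms of the clique size, so ``comfortably satisfied'' is an assertion, not an argument; and (b) even granting the local contraction, the claim that an overshooting iterate is ``thereafter trapped in the contracting neighborhood'' is unsupported --- after an overshoot the iterate lands at distance up to $\sup_\lambda h(\lambda)/L$ from $\lambda_{fin}$, which need not lie inside that neighborhood, and outside it the contraction estimate says nothing. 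The paper sidesteps both difficulties at once: it works under the hypothesis $\Psi(\lambda_0) > \lambda_0$ and proves by induction, using an explicit slope bound $C_0$ computed from the model (via the constants $C_1, C_2$), that $\Psi(\lambda_k) \geq \lambda_k$ for every $k$; hence the sequence never crosses the fixed point, is non-decreasing and bounded, and converges to the unique fixed point by monotone convergence. That monotone-from-below argument with model-derived constants is what replaces your unverifiable contraction condition, and it is the missing idea in your proposal.
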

The details of the proof are given in the Appendix.

\section{Performance}
\label{sec:sims}
In this section, we demonstrate through extensive simulations that CapEst not only converges quickly but also converges to the correct 
rate allocation. Thus, we verify both the correctness and the convergence of CapEst.  

\begin{figure*}
\centerline{\subfigure[]{\includegraphics[width=2.5cm]{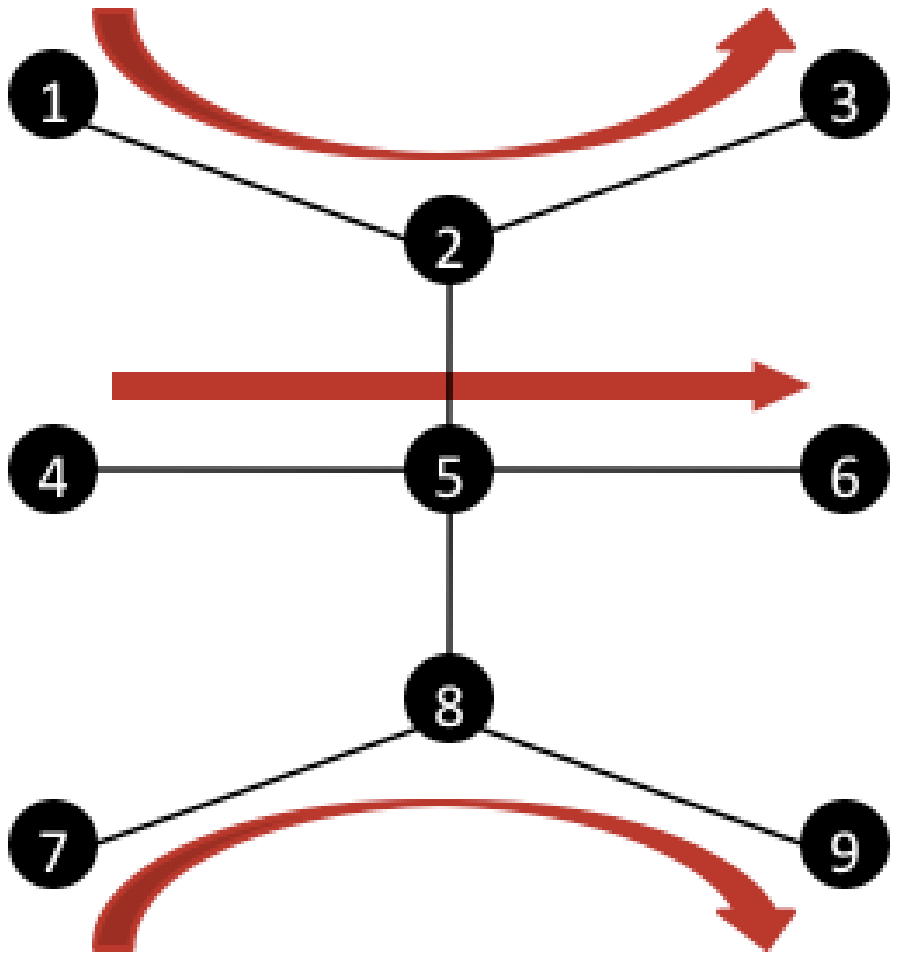}
\label{fig:fim}}
\hfil
\subfigure[]{\includegraphics[width=5.0cm]{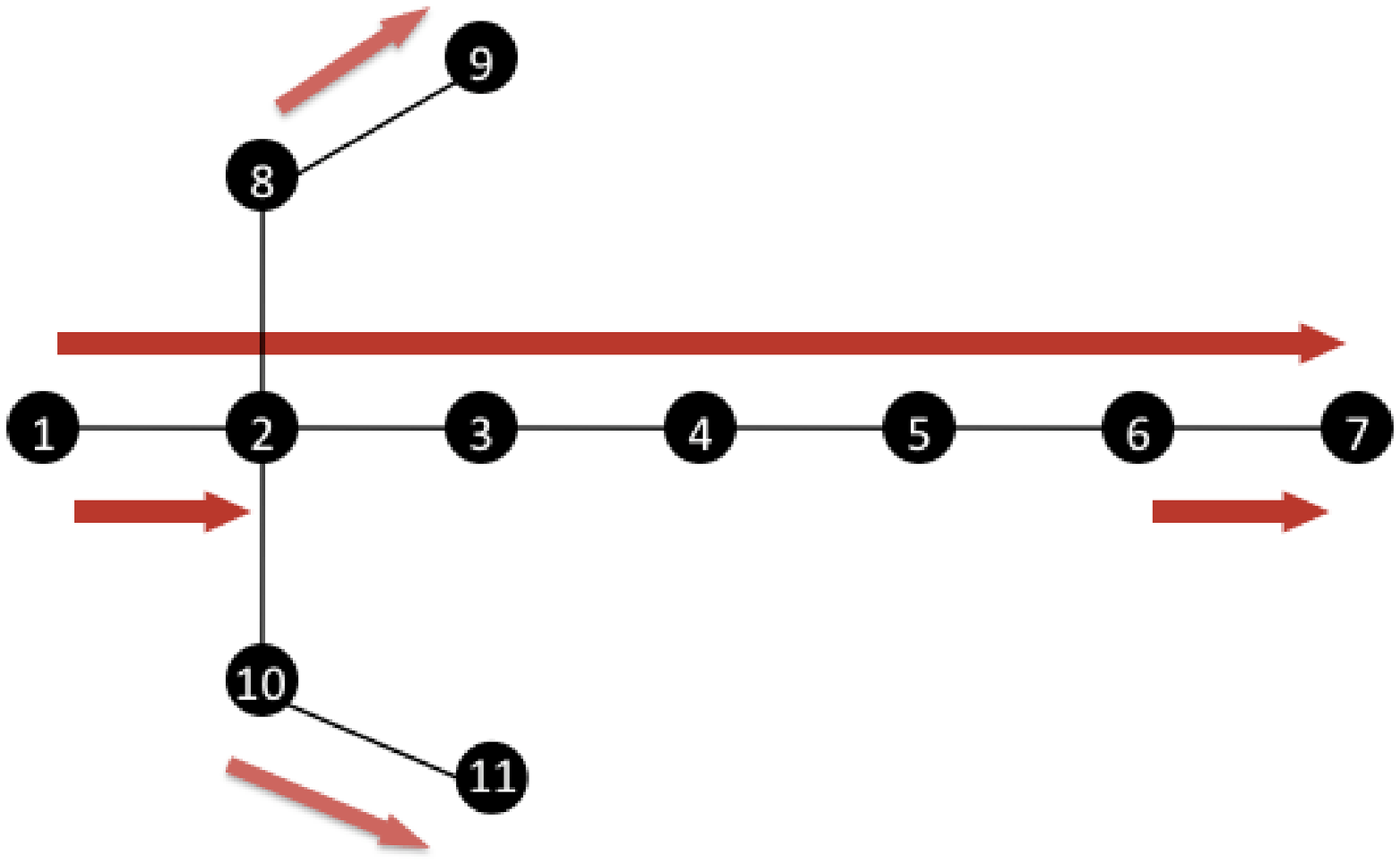}
\label{fig:chain_cross}}
\hfil
\subfigure[]{\includegraphics[width=4.5cm]{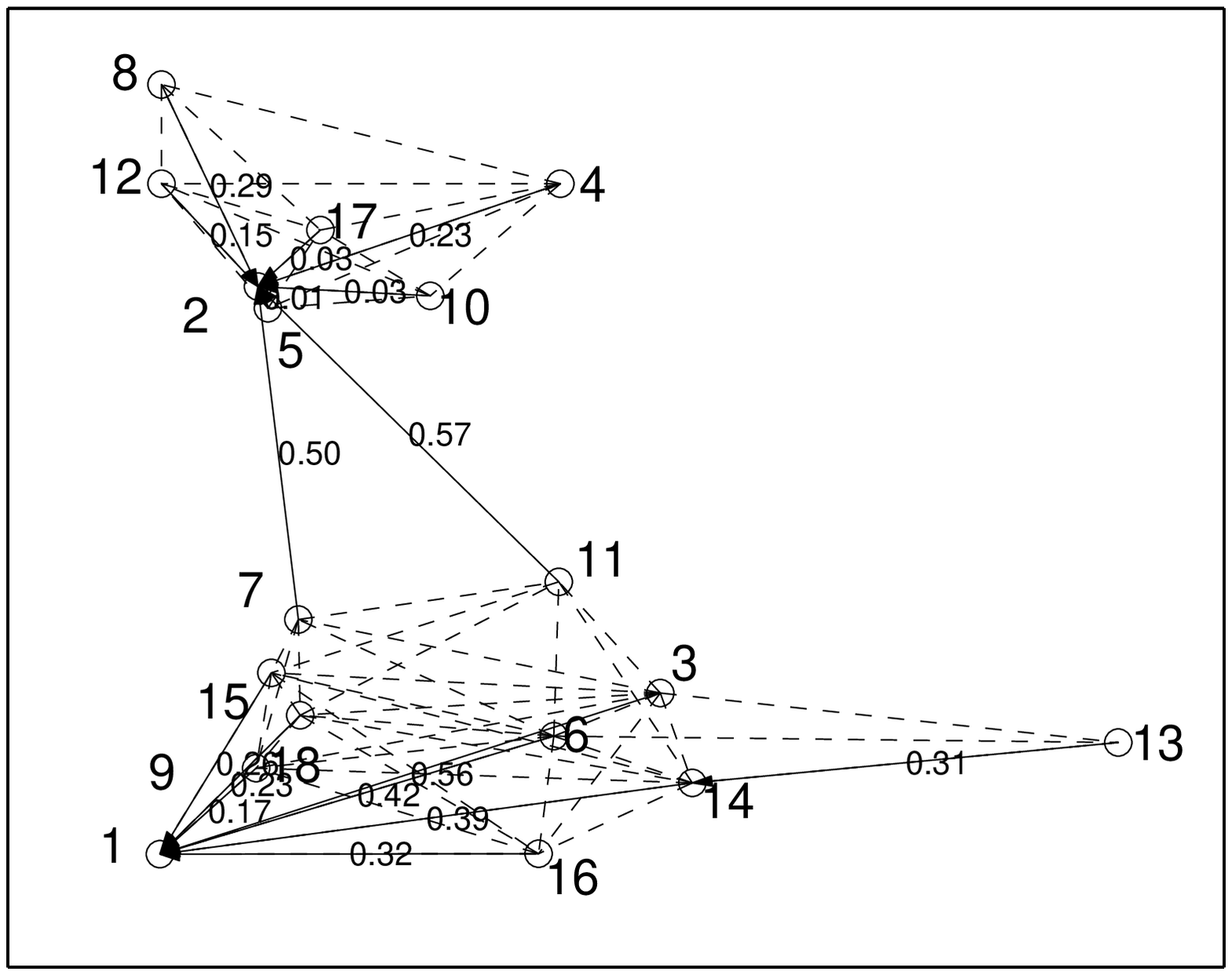}
\label{fig:trice}}
\hfil
\subfigure[]{\includegraphics[width=4.0cm]{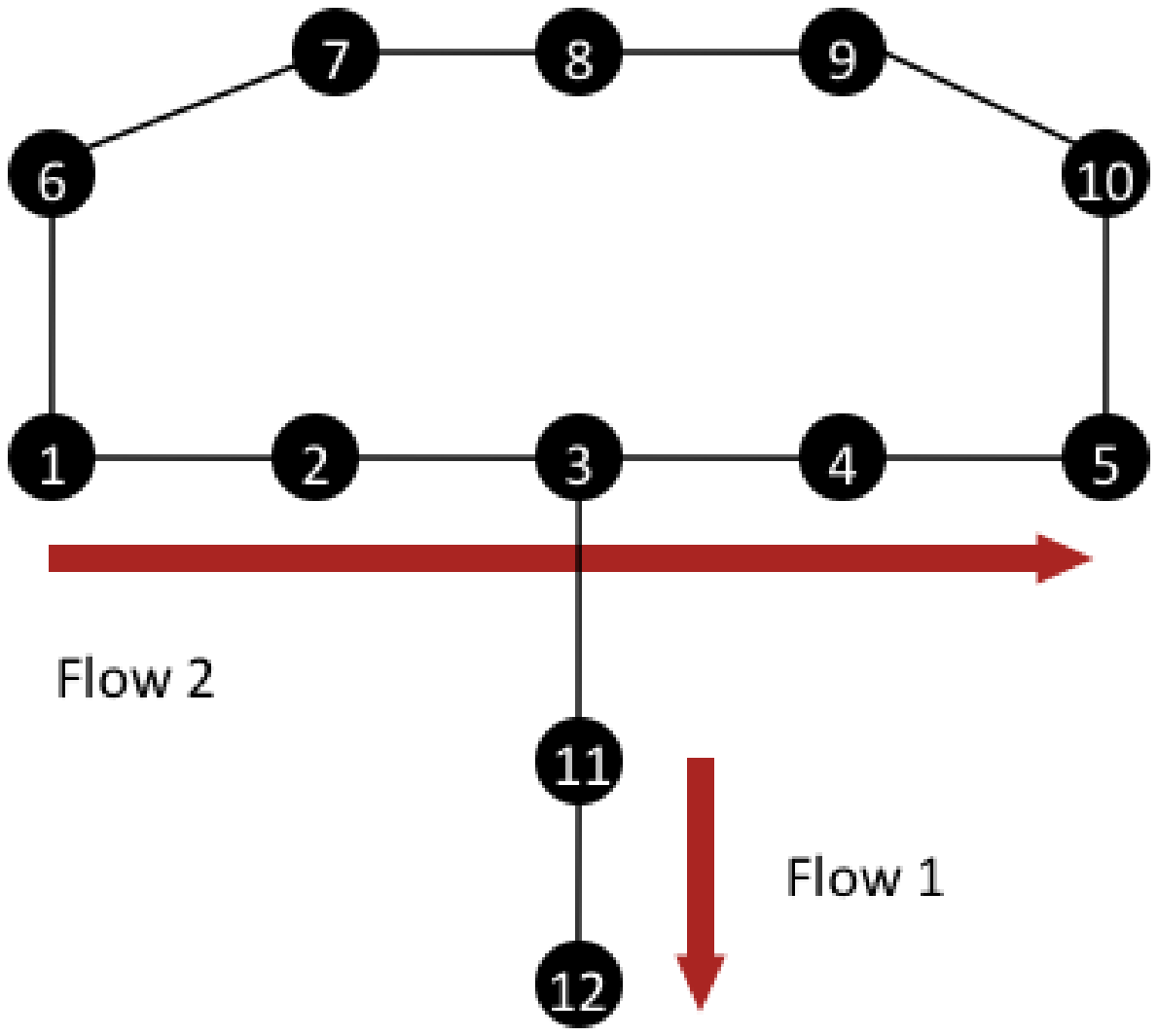}
\label{fig:routing_top}}
}
\caption{(a) Flow in the Middle Topology. (b) Chain-cross Topology. (c) Deployment at Houston. (d) Topology to evaluate interference-aware routing.}
\end{figure*}
\begin{table}
\small
\begin{center}
\begin{tabular}{|c|c|}  \hline
Packet Payload & $1024$ bytes \\ \hline MAC Header &  $34$ bytes \\
\hline PHY Header & $16$ bytes \\ \hline ACK & $14$ bytes + PHY
header \\ \hline RTS & $20$ bytes + PHY header \\ \hline CTS & $14$
bytes + PHY header \\ \hline Channel Bit Rate & $11$ Mbps \\ \hline
Propagation Delay & $1$ $\mu$s \\ \hline Slot Time & $20$ $\mu$s \\
\hline SIFS & $10$ $\mu$s \\ \hline DIFS & $50$ $\mu$s \\ \hline
Initial backoff window & $31$ \\ \hline Maximum backoff window & $1023$ \\ \hline
\end{tabular}
\end{center}
\normalsize
\caption{Simulation parameters.}
\label{parameters}
\end{table}

We evaluate CapEst with the max-min fair centralized rate allocator over a number of different topologies, 
for different MAC protocols, with finite retransmit values at the MAC layer and with auto-rate adaptation.
We observe that CapEst {\it always} converges to within $5\%$ of the optimal rate allocation in less than $18$ iterations.

\subsection{Methodology}
We use Qualnet version 4.0 as the simulation platform. All our simulations are conducted using an unmodified IEEE 802.11(b) MAC (DCF). RTS/CTS 
is not used unless explicitly stated. We use the default parameters of IEEE 802.11(b) 
(summarized in Table~\ref{parameters}) 
in Qualnet. Unless explcitly stated, auto-rate adaptation is turned off, the link rate is set to $11$ Mbps, the packet size is set to $1024$ bytes
and the maximum retransmit limits are set to a very large value. This setting allows us to first evaluate the performance of the basic CapEst mechanism without the
modifications for auto-rate adaptation and finite retransmit limit. Later, we include both 
to evaluate the mechanisms to account for them. We run bulk transfer flows till $10,000$ packets per flow have been delivered.
(Section~\ref{sec:sim_discuss} discusses the impact of having short-flows in the network.)
Finally, to determine the actual max-min rate allocation, we use the methodology proposed by~\cite{our:mobicom}.

The implementation of CapEst and the max-min fair centralized rate allocator closely follows their description in Section~\ref{sec:capest}.
We choose the iteration duration to be $200$ packets, that is, each link resets its estimate of expected service time after transmitting $200$ packets\footnote{Note 
that based on the convergence time requirements and overhead of the application, one can decide to choose a longer iteration duration too.}.
Finally, to be able to correctly distribute capacity, the centralized rate allocator also needs to be aware of which links interfere with each other. 
We use the binary LIR interference model described in~\cite{ramesh:conext} to determine which links interfere. 
The link interference ratio (LIR) is defined as $LIR = \frac{c_{31} + c_{32}}{c_{11}+c_{22}}$ where $c_{11}, c_{22}$ and $c_{31}, c_{32}$
are UDP throughputs when the links are backlogged and transmit individually and simultaneouly respectively. $LIR = 1$ implies no interference,
with lower LIR's indicating a higher degree of interference. Similar to the mechanism in~\cite{ramesh:conext}, links with $LIR>0.95$ are
classified as non-interfering. 

\begin{figure*}
\centerline{\subfigure[]{\includegraphics[width=4.5cm]{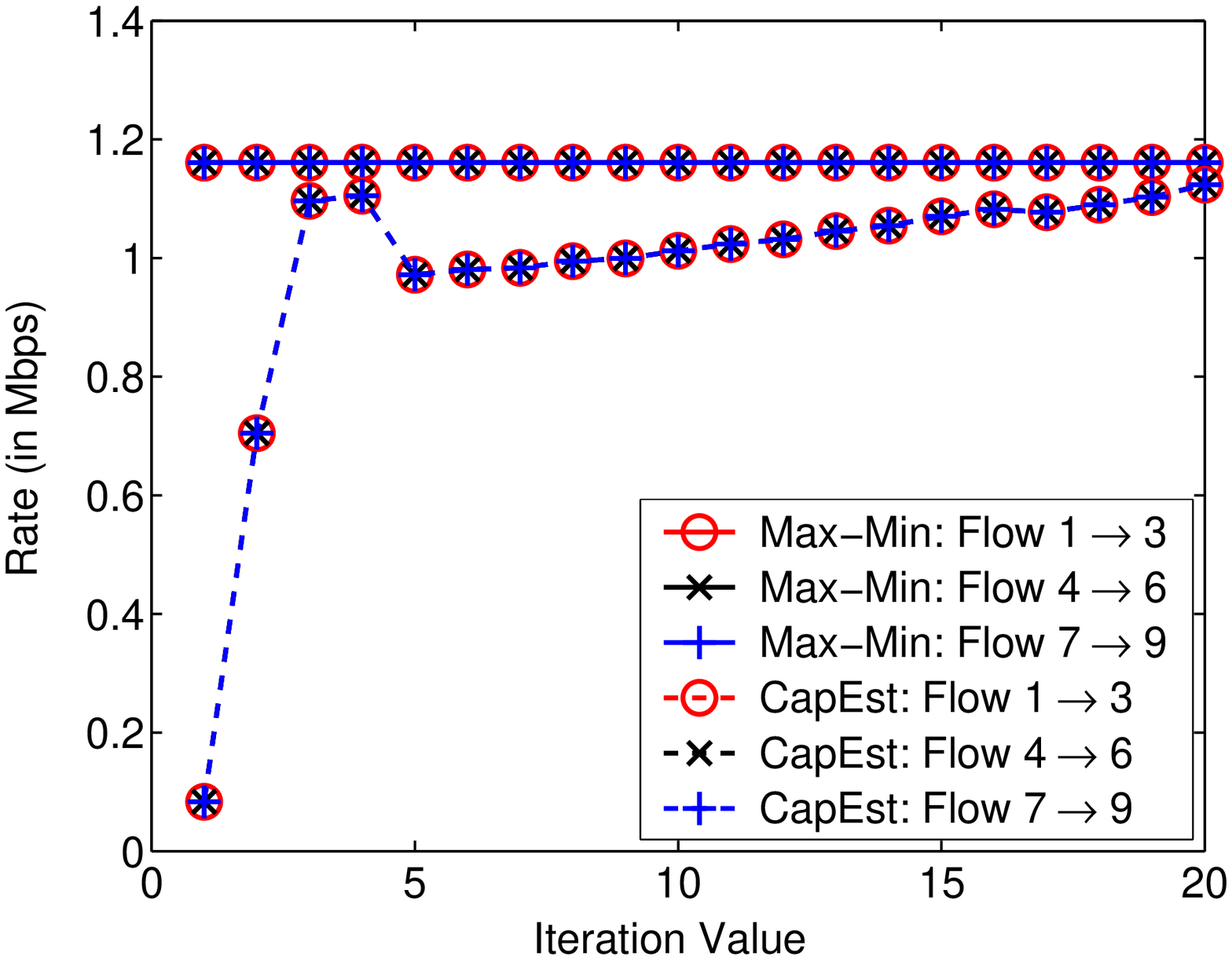}
\label{fig:plot_fim}}
\hfil
\subfigure[]{\includegraphics[width=4.5cm]{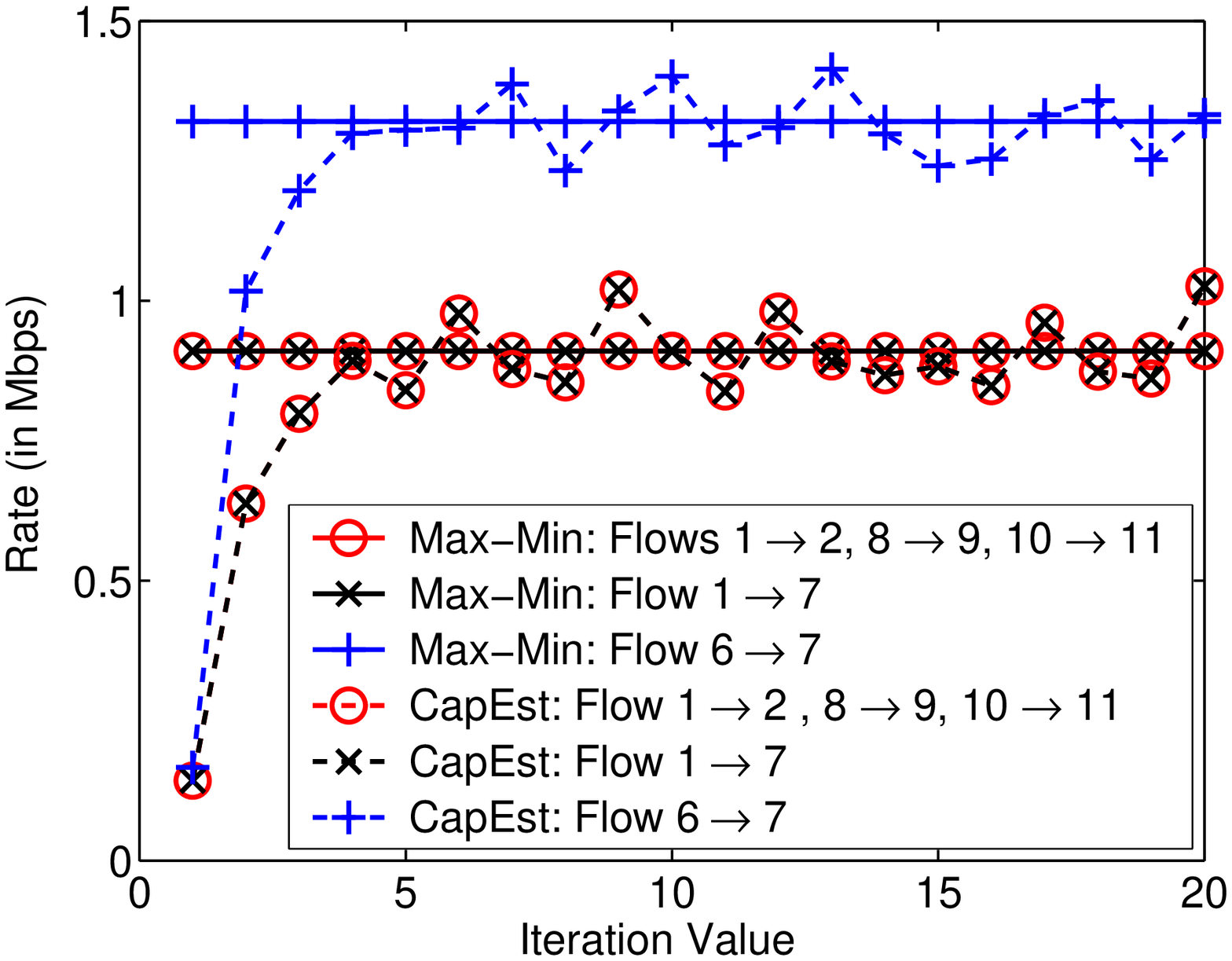}
\label{fig:plot_chain_cross}}
\hfil
\subfigure[]{\includegraphics[width=4.5cm]{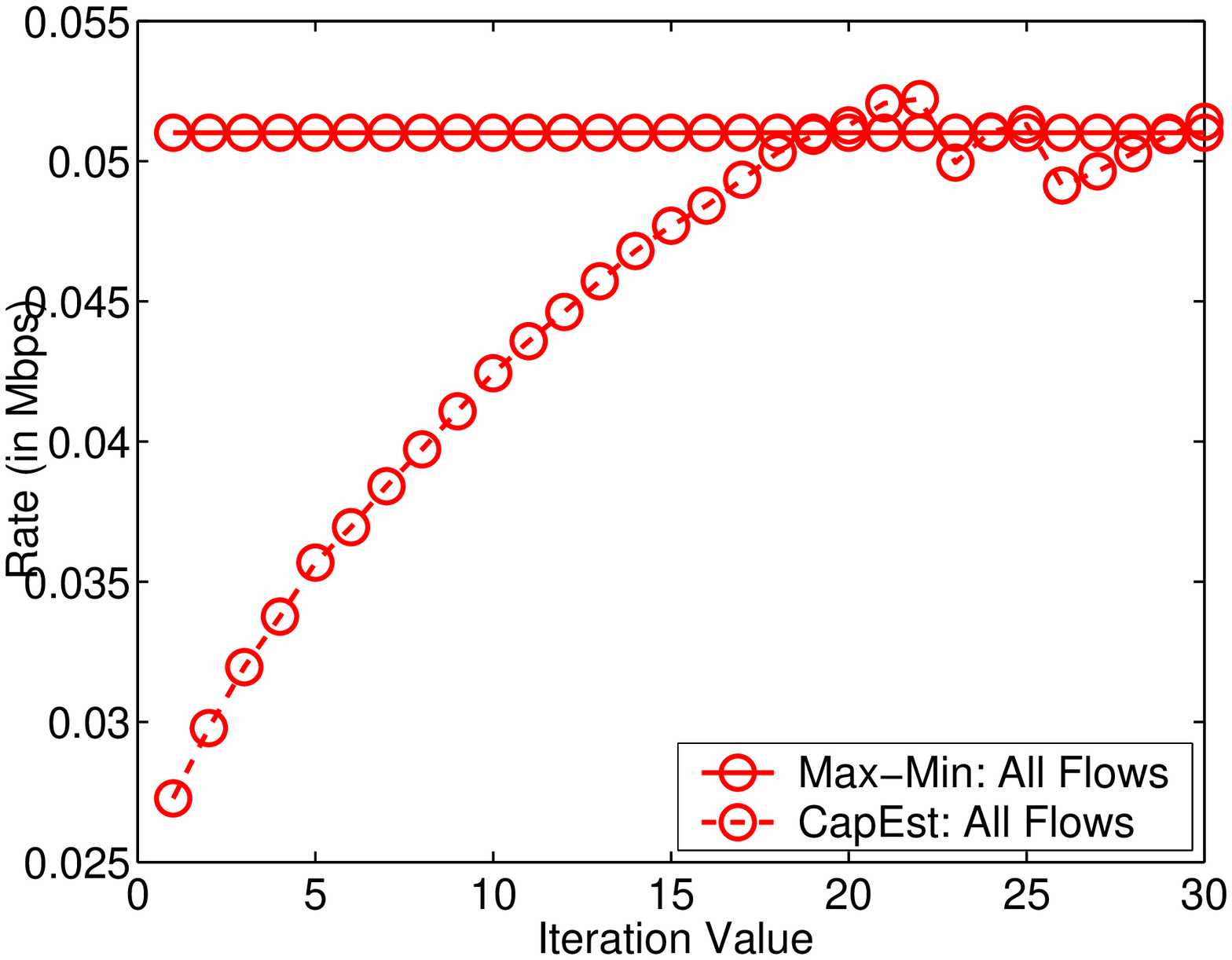}
\label{fig:plot_rnd_small}}
\hfil
\subfigure[]{\includegraphics[width=4.5cm]{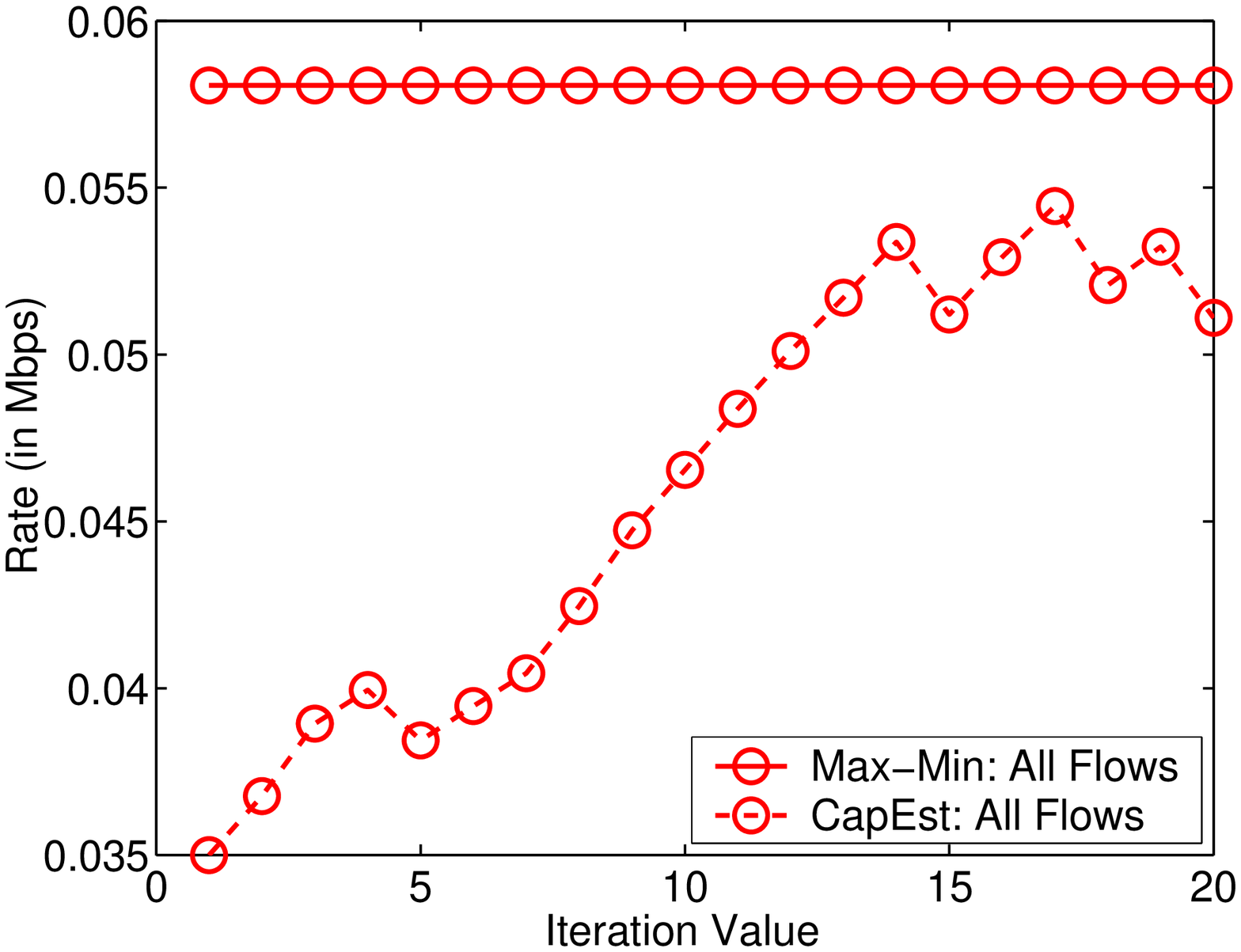}
\label{fig:plot_rnd_large}}}
\caption{Performance of CapEst. (a) Flow in the Middle. (b) Chain-Cross. (c) Random topology: $30$ nodes, $5$ flows. (d) Random topology: $100$ nodes, $10$ flows.}
\end{figure*}

\subsection{Commonly Used Multi-hop Topologies}
In this section, we evaluate CapEst with two different, commonly used topologies. 
Simulations on these two topologies are conducted with zero channel losses, although packet losses due to collisions do occur.
We adjusted the carrier sense threshold to reduce the interference range to be able to generate these topologies.

\subsubsection{Flow in the Middle Topology} Figure~\ref{fig:fim} shows the topology. This topology has been studied and used by several researchers
to understand the performance of different rate control and scheduling protocols in mesh networks~\cite{apoorva:ton,our:mobicom,nred}.
Figure~\ref{fig:plot_fim} plots the evolution of the rate assigned to each flow by the centralized rate allocator.
We observe that the mechanism converges to within $5\%$ of the optimal max-min rate allocation in less than $16$ iterations. 

\subsubsection{Chain-Cross Topology} Figure~\ref{fig:chain_cross} shows the topology. This topology was proposed by~\cite{our:mobicom} to understand the
performance of rate control protocols in mesh networks. This topology has a flow in the middle which goes over multiple hops ($1 \rightarrow 7$) 
as well as a smaller flow in the middle ($1 \rightarrow 2$). Figure~\ref{fig:plot_chain_cross} plots the evolution of the rate assigned
to each flow by the centralized rate allocator. We see that the mechanism converges to within $5\%$ of the optimal in less than $5$ iterations.
Note that the flow-rate allocated slightly exceeds the optimal allocation at certain iteration values. This occurs because of the estimation error
in the measurement of expected service time. Thus, even though Equation (\ref{eqn:constraint}) is set-up to keep the rate allocation feasible,
error in the estimation of the expected service time can lead to infeasible rate allocations.

\subsection{Randomly Generated Topologies}
\label{sec:rnd}
We generate two topologies by distributing nodes in a square area uniformly at random. 
The source-destination pairs are also randomly generated. The first random topology has $30$ nodes and $5$ flows, while
the second has $100$ nodes and $10$ flows. We use the two-ray path loss model with Rayleigh fading and log-normal shadowing~\cite{Rappaport:book} 
as the channel model in simulations. The carrier-sense threshold, the 
noise level and the fading and shadowing parameters are set to their default values in Qualnet. We use AODV to set up the routes. 
Figures~\ref{fig:plot_rnd_small} and~\ref{fig:plot_rnd_large} plot the evolution of the rate assigned to each flow by the centralized rate allocator. 
For the smaller random topology, the mechanism converges to within $5\%$ of the optimal within $18$ iterations. 

For the larger topology, the mechanism converges to a rate smaller than the optimal. The reason is as follows. In this topology, the link which is getting congested
first, or in other words, has the least residual capacity remaining at all iterations of the algorithm, is a high-loss link\footnote{A link which suffers from a 
large number of physical layer losses without including losses due to collisions is referred to as high-loss link.} (loss rate $> 40\%$). 
Hence, the iteration duration has to be larger than $200$ packets to obtain an accurate estimate. 
If we increase the iteration duration to $500$ packets, as shown in Figure~\ref{fig:plot_rndl500}, the mechanism converges to within $5\%$ of the optimal 
within $15$ iterations. Note that, in general, routing schemes like ETX~\cite{etx} will avoid the use of such high-loss links for routing, and hence,
for most cases, an iteration duration of $200$ packet suffices.

\begin{figure}[htb]
 \centering
\includegraphics[width=5.0cm]{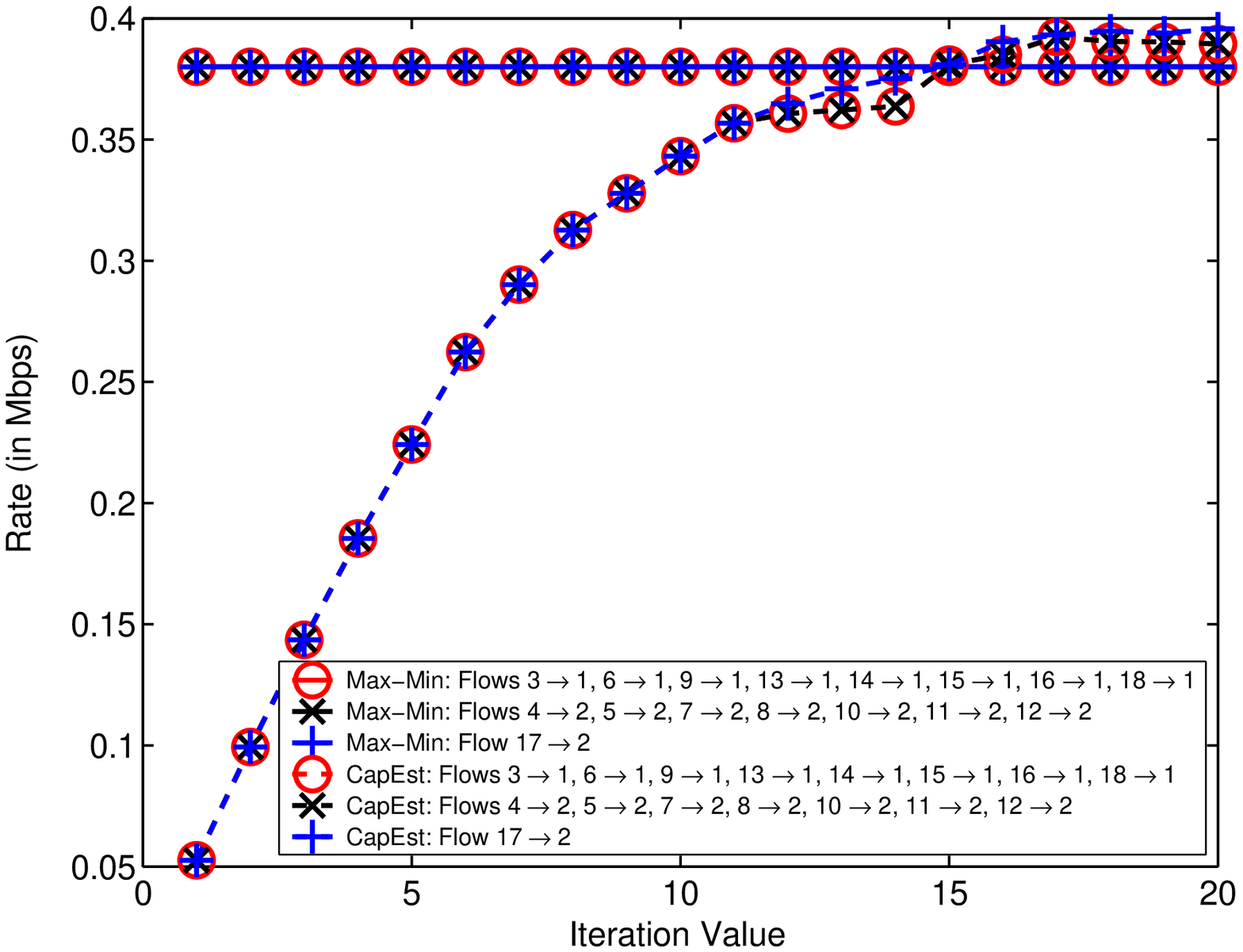}
  \caption{Performance of CapEst: Deployment at Houston.}
\label{fig:plot_houston}
\end{figure}

\subsection{A Real Topology: Deployment at Houston}
The next topology is derived from an outdoor residential deployment in a Houston neighborhood~\cite{ed_knightly_mesh_experiments}.
The node locations (shown in Figure~\ref{fig:trice}) are derived from the deployment and fed into the simulator. The physical channel that we use 
in the simulator is a two-ray path loss model with log-normal shadowing and Rayleigh fading. The ETX routing metric~\cite{etx} (based on
data loss in absence of collisions) is used to set up the routes. Nodes $1$ and $2$ are connected to the wired world and serve as gateways for this deployment. 
All other nodes route their packets toward one of these nodes (whichever is closer in terms of the ETX metric). The resulting topology
as well as the routing tree is also shown in Figure~\ref{fig:trice}.

Figure~\ref{fig:plot_houston} plots the evolution of the rate assigned to each flow by the centralized rate allocator.
Again, the mechanism converges to within $5\%$ of the optimal in less than $12$ iterations. 

\begin{figure}
\centerline{\subfigure[]{\includegraphics[width=4.0cm]{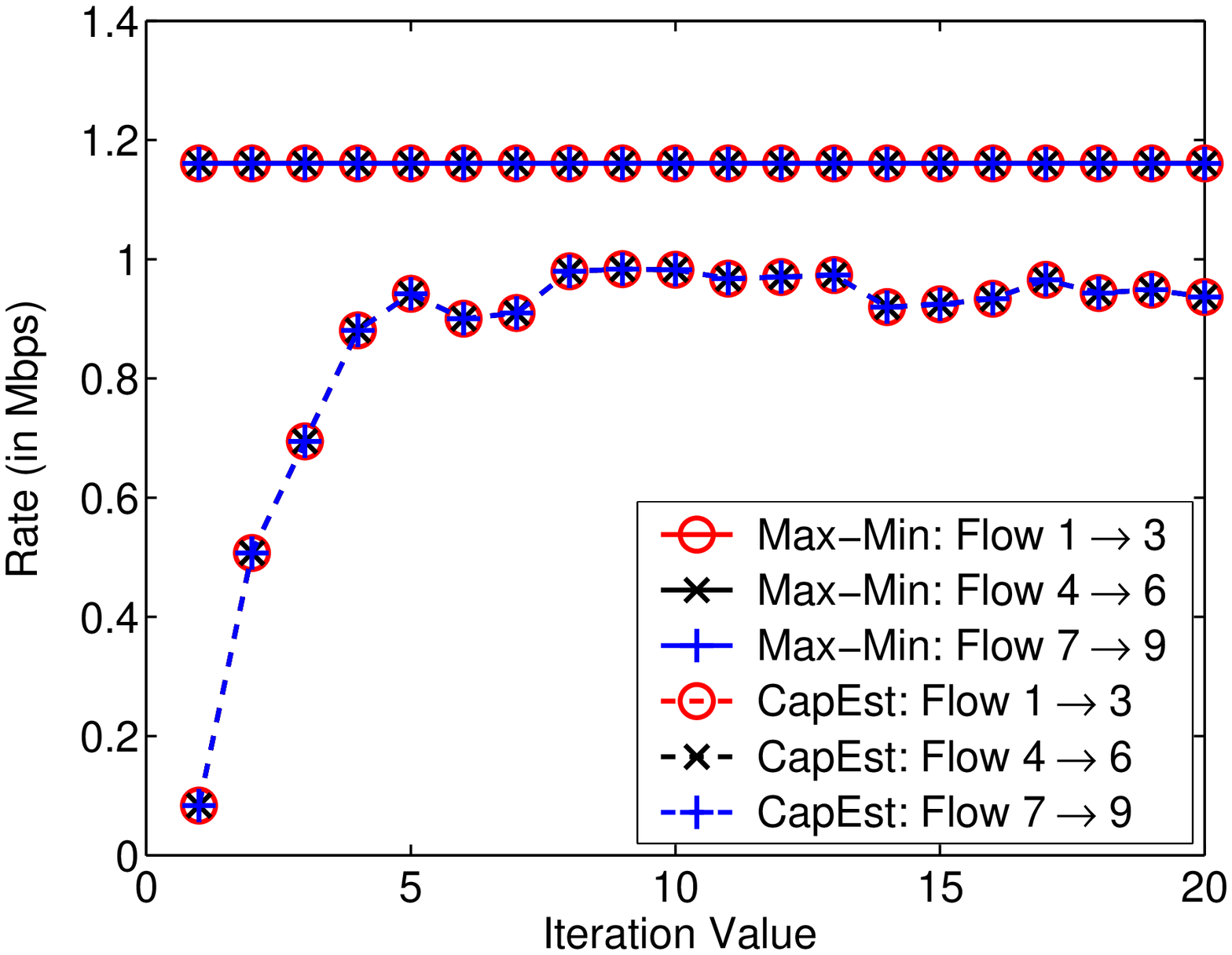}
\label{fig:plot_75_fim}}
\hfil
\subfigure[]{\includegraphics[width=4.0cm]{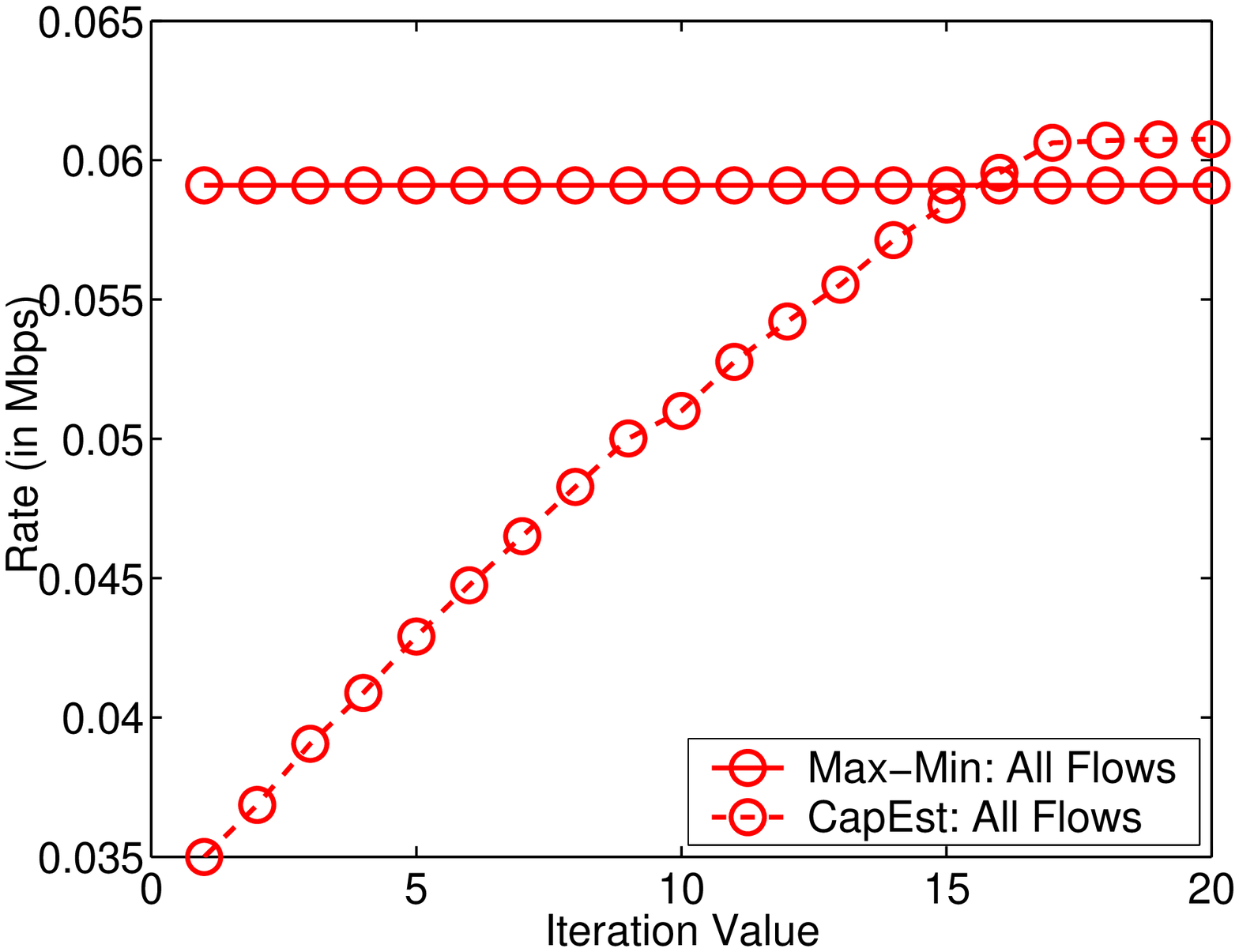}
\label{fig:plot_rndl500}}}
\caption{Performance of CapEst with a different iteration duration. (a) Flow in the Middle with iteration duration = $75$ packets. 
(b)  Random topology: $100$ nodes, $10$ flows with iteration duration = $500$ packets.}
\end{figure}

\subsection{Impact of a Smaller Iteration Duration}
\label{sec:sim_interval_length}
In this section, we evaluate the impact of using a smaller iteration duration on performance. We plot the evolution of the rate assigned to each flow 
in Figure~\ref{fig:plot_75_fim} for the flow in the middle topology with one iteration duration $= 75$ packets. We observe that the rate converges to a value smaller than
the optimal. (Note that we had made a similar observation for the $100$ node random topology in Section~\ref{sec:rnd}.) 
In general, for all topologies we studied, we observed that using an iteration duration not sufficiently large to allow the estimated expected
service time to converge leads to the mechanism converging to a rate smaller than the optimal, however, the mechanism still always converged and
did not suffer from oscillations. 

\begin{figure}
\centerline{\subfigure[]{\includegraphics[width=4.0cm]{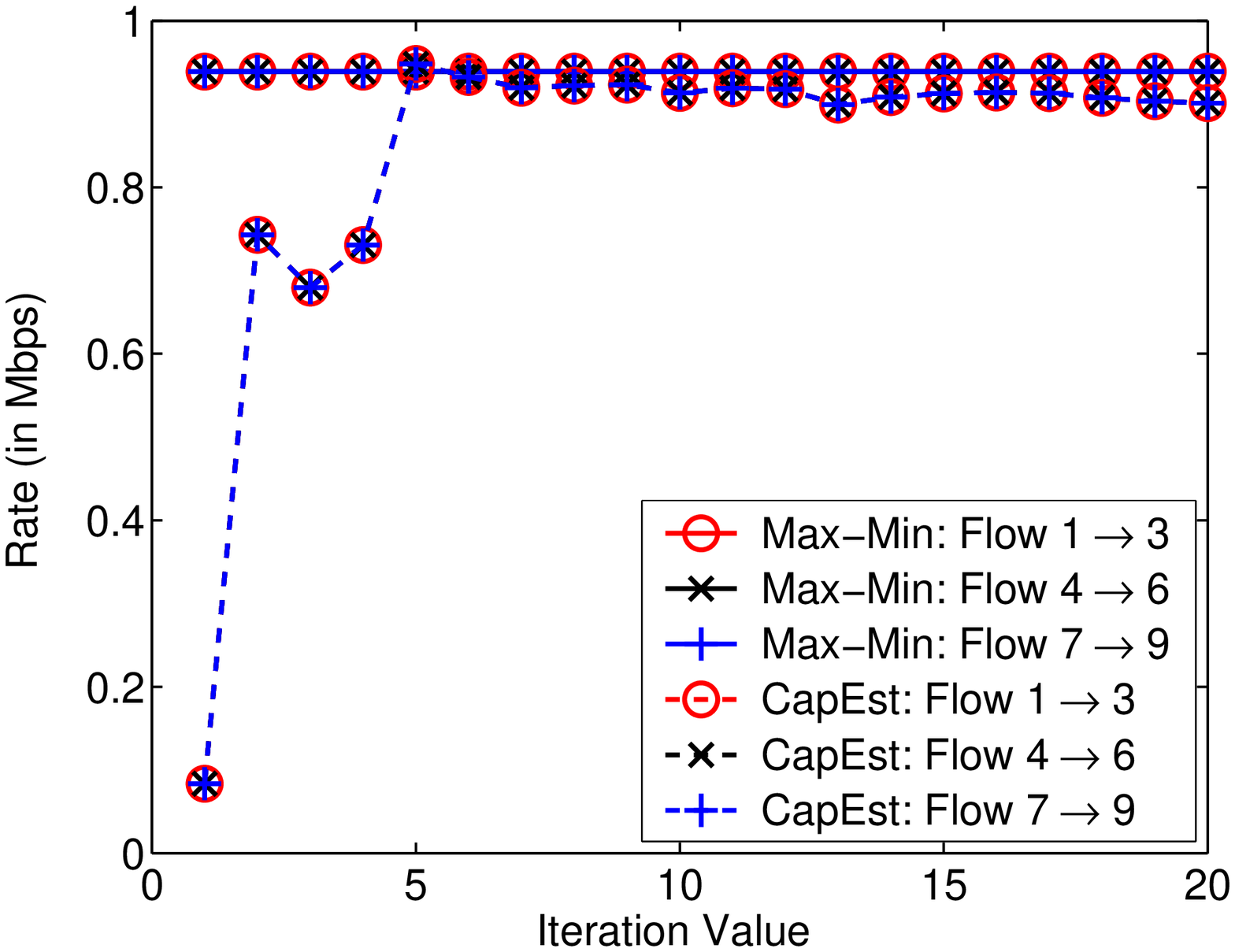}
\label{fig:plot_fim_rtscts}}
\hfil
\subfigure[]{\includegraphics[width=4.0cm]{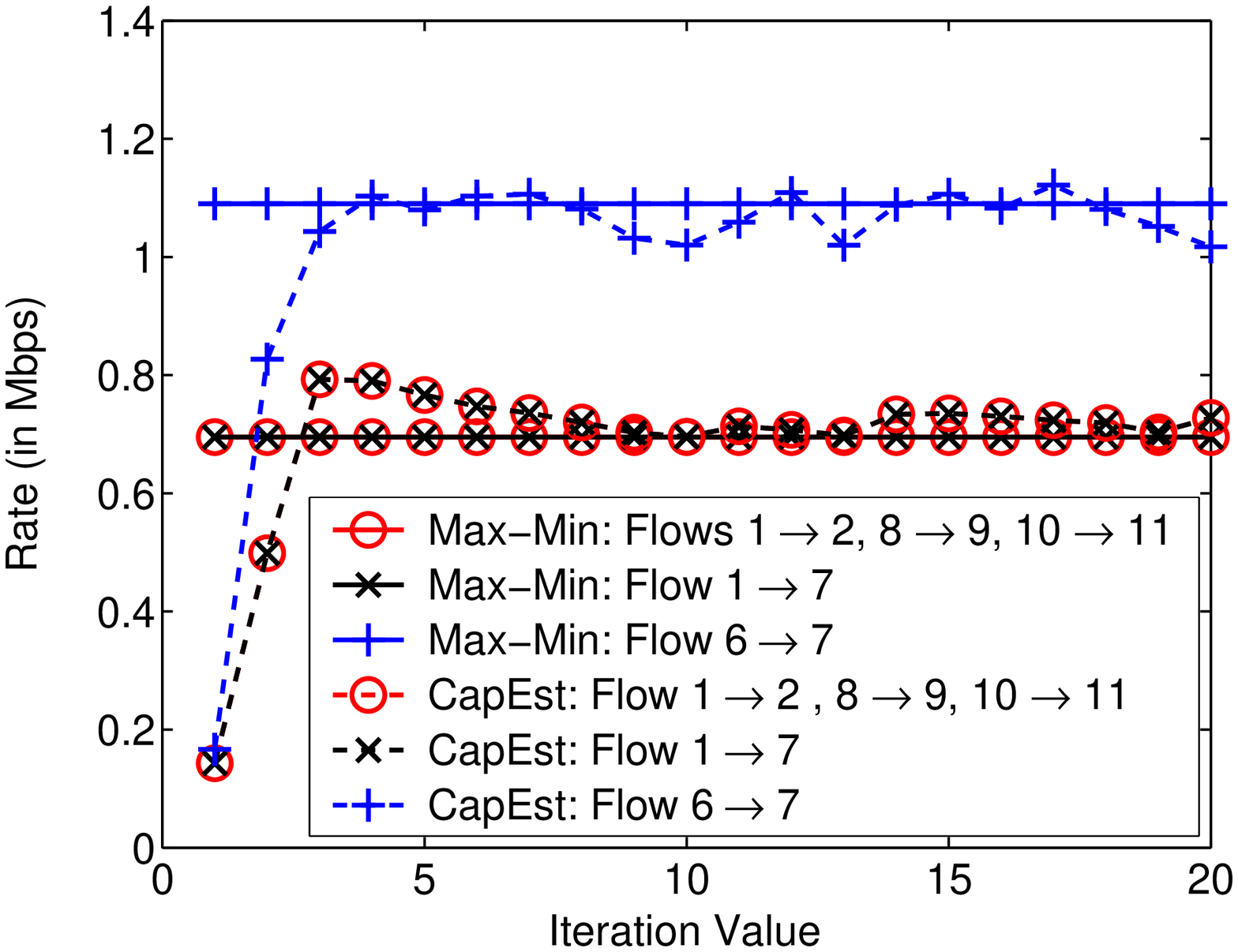}
\label{fig:plot_cc_rtscts}}}
\caption{Performance of CapEst with IEEE 802.11 with RTS/CTS. (a) Flow in the Middle. (b) Chain-cross.}
\end{figure}

\subsection{Different MAC layers}
\label{sec:diff_mac}
An attractive feature of CapEst is that it does not depend on the MAC/PHY layer being used. 
Hence, in future, if one decides to use a different medium access or physical layer, CapEst can be retained without any changes. 
In this section, we evaluate the performance of CapEst with two different medium access layers.

\begin{figure}
\centerline{\subfigure[]{\includegraphics[width=4.0cm]{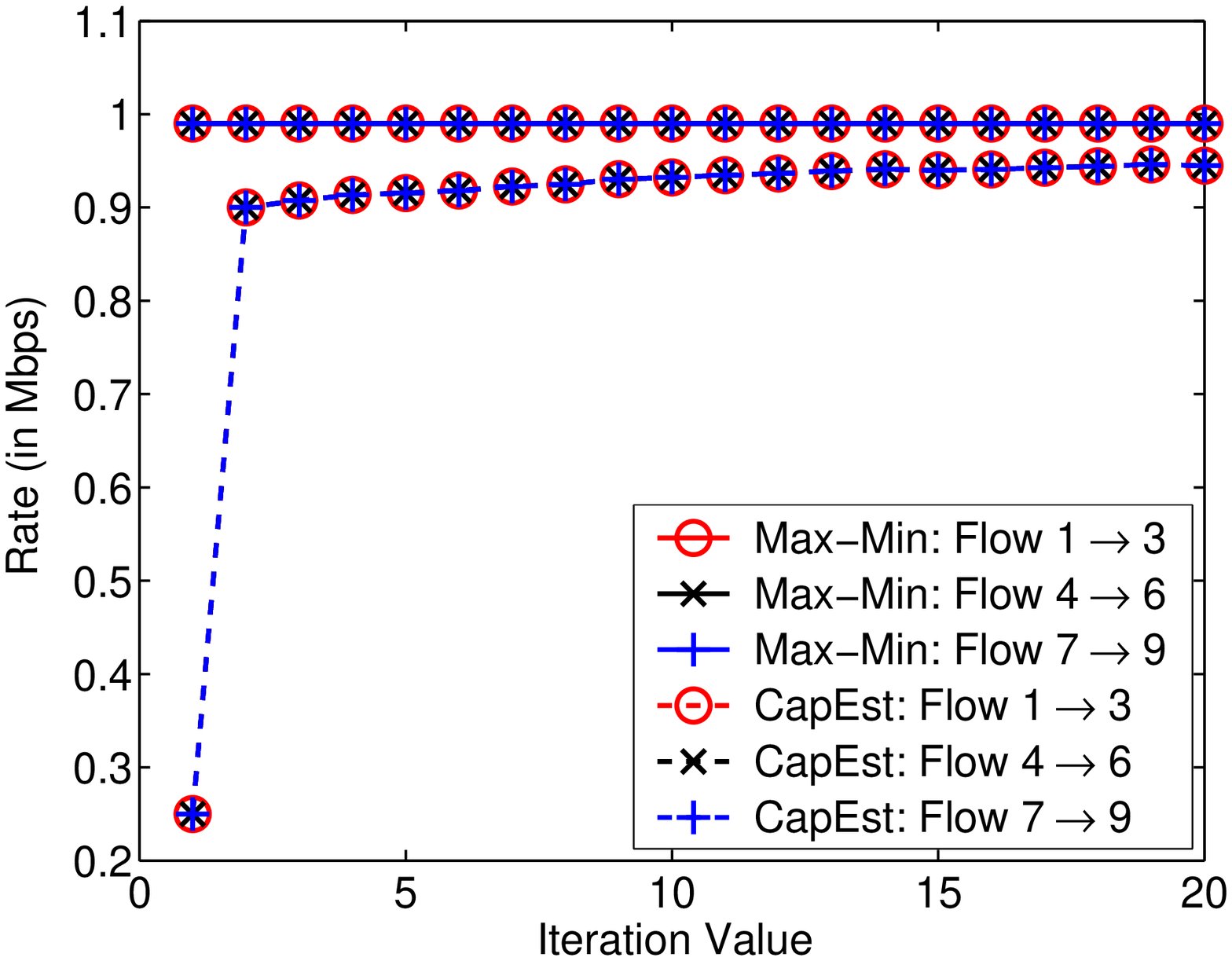}
\label{fig:plot_fim_backpressure}}
\hfil
\subfigure[]{\includegraphics[width=4.0cm]{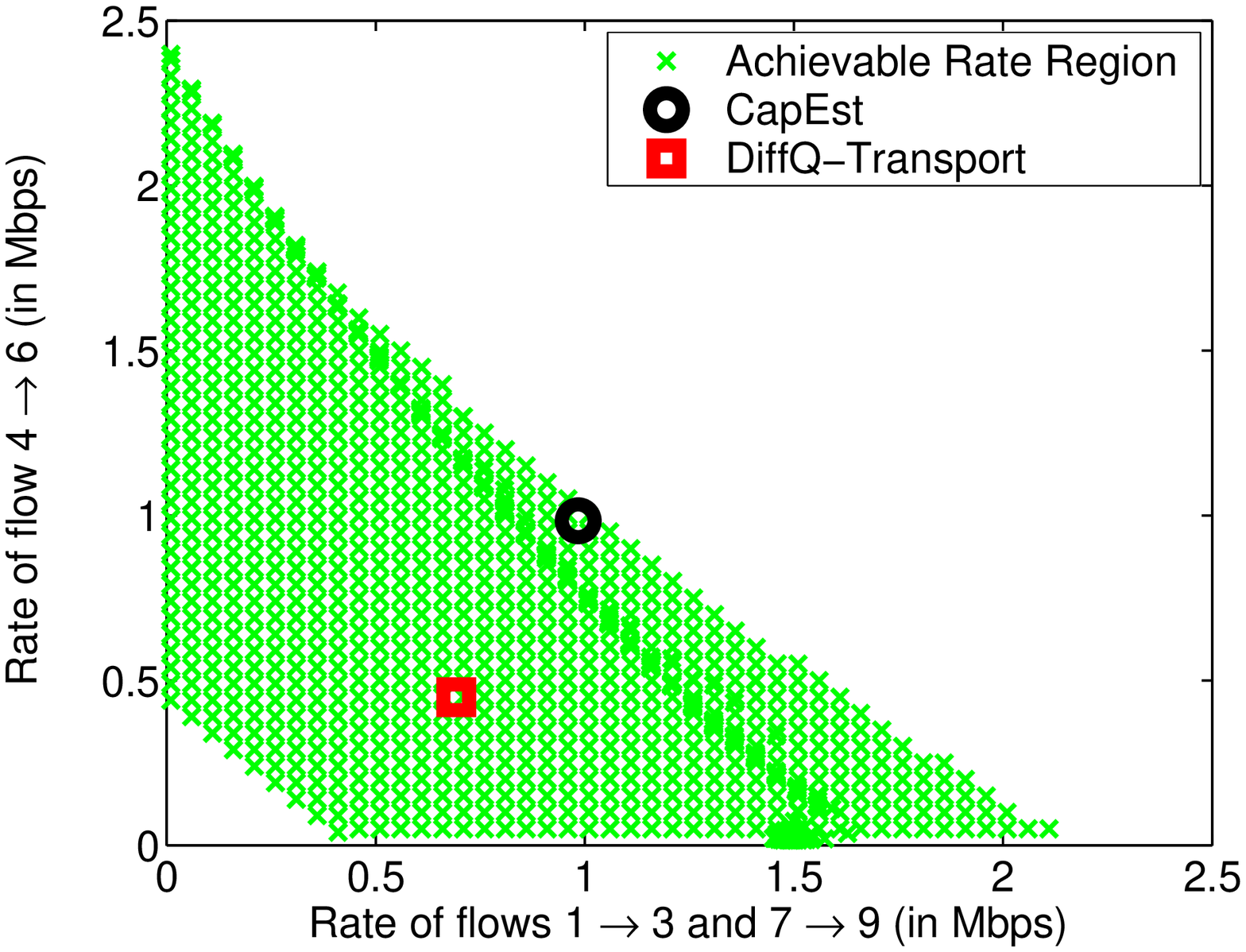}
\label{fig:ach_rate_backpressure}}}
\caption{(a) Performance of CapEst with DiffQ-MAC: Flow in the Middle. (b) CapEst vs DiffQ-Transport over DiffQ-MAC.}
\end{figure}

\subsubsection{With RTS/CTS} We first evaluate the performance of CapEst with IEEE 802.11 DCF with RTS/CTS. Figures~\ref{fig:plot_fim_rtscts} and ~\ref{fig:plot_cc_rtscts} 
plot the evolution of the assigned flow rates for the flow in the middle topology and the chain-cross topology respectively.
For both the topologies, the mechanism converges to within $5\%$ of the optimal within $6$ iterations.   

\subsubsection{Back-pressure MAC} We next evaluate the performance of CapEst with a back-pressure-based random-access protocol~\cite{warrier:infocom,srikant:random,shroff:random}.
The fundamental idea behind back-pressure based medium access is to use queue sizes as weights to determine which link gets scheduled.
Solving a max-weight formulation, even in a centralized manner, is NP-hard~\cite{sharma:mobicom}. So, multiple researchers have suggested using 
a random access protocol whose channel access probabilities inversely depend on the queue size~\cite{warrier:infocom,srikant:random,shroff:random}.
This ensures that the probability of scheduling a packet from a larger queue is higher. 
The most recent of these schemes is {\it DiffQ}~\cite{warrier:infocom}. DiffQ comprises of both a MAC 
protocol as well as a rate control protocol. We refer to them as DiffQ-MAC and DiffQ-Transport respectively. 
The priority of each head of line packet in a queue
is determined by using a step-wise linear function of the queue size, and each priority is mapped to a different AIFS, CWMin and CWMax parameter
in IEEE 802.11(e). 

Back-pressure medium access is very different from the traditional IEEE 802.11 DCF in conception. However, CapEst can still accurately
measure the capacity at each edge. Figure~\ref{fig:plot_fim_backpressure} plots the evolution of the assigned flow rates for the flow in the middle topology
with DiffQ-MAC. The different priority levels as well as the AIFS, CWMin and CWMax values
being used are the same as the ones used in~\cite{warrier:infocom}. Again, the mechanism converges to the optimal values within $5\%$ of the optimal
within $12$ iterations. 

This set-up also demonstrates the advantage of having a rate control mechanism which does not depend on the MAC/PHY layers. 
Optimal rate-control protocols for a scheduling mechanism which solves the max-weight problem at each step are known. 
However, if we use a distributed randomized scheduling mechanism like DiffQ-MAC, these rate control protocols are no longer optimal. 
But, using a rate allocation mechanism based on CapEst, which makes no
assumption on the MAC layer, ensures convergence to a rate point close to the optimal. For example, Figure~\ref{fig:ach_rate_backpressure}
plots the achievable rate region (or the feasible rate region) for DiffQ-MAC for the flow in the middle topology.
We plot the rate of the middle flow against the rate of the outer two flows. (Using symmetry to assume that the rate of the outer flows is equal
simplifies the figure as it becomes a two-dimensional figure instead of three-dimensional one.) 
Figure~\ref{fig:ach_rate_backpressure} also plots the throughput achieved by CapEst after $15$ iterations of the algorithm as well as the throughput achieved
by DiffQ-Transport (originally proposed by~\cite{low:aimd_wireless} and shown to be optimal 
with centralized max-weight scheduling). The figure shows that CapEst allocates throughput within $5\%$ of the optimal
while DiffQ-Transport achieves only $55\%$ of the optimal throughput.

\begin{figure}
\centerline{\subfigure[]{\includegraphics[width=4.0cm]{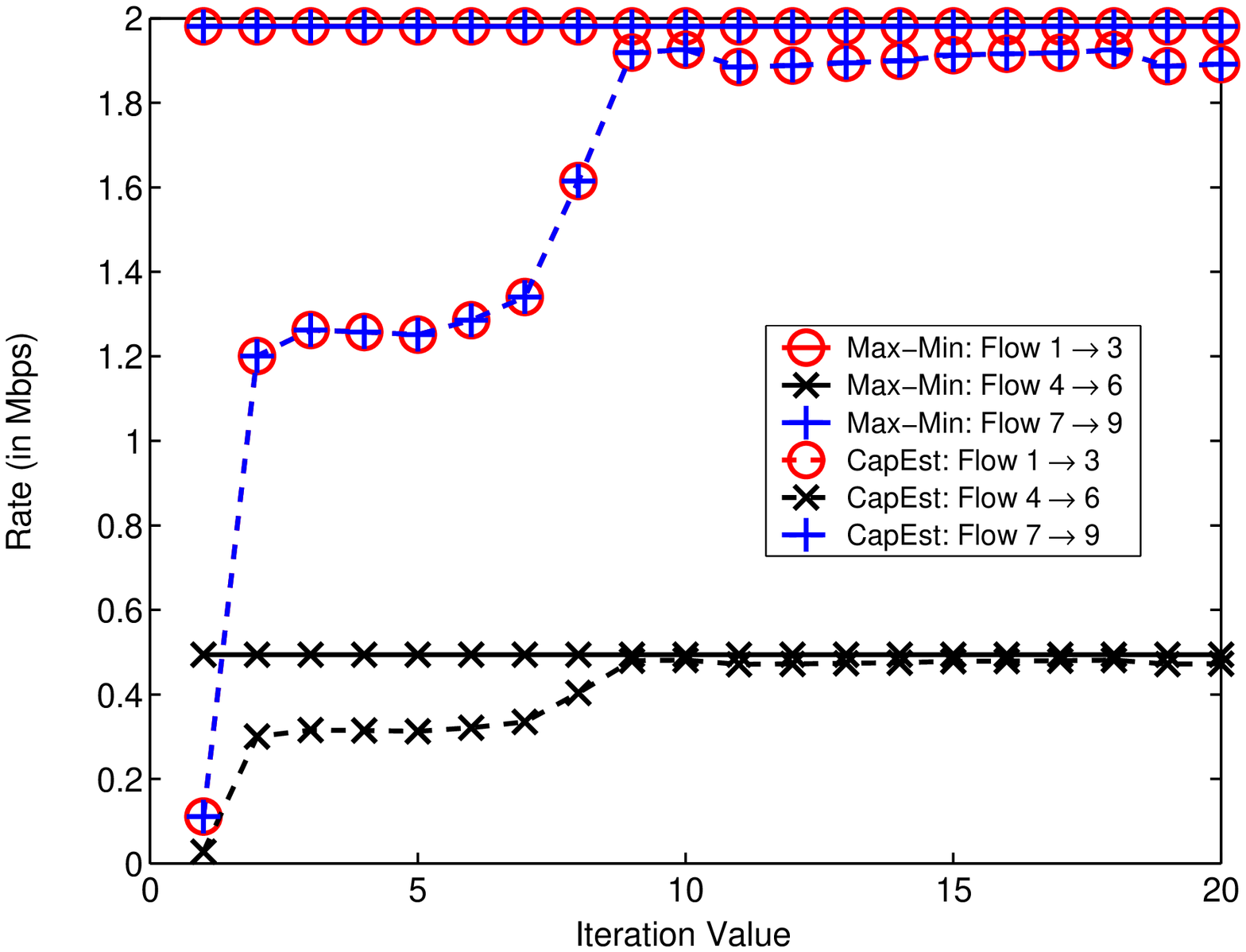}
\label{fig:plot_fim_weighted}}
\hfil
\subfigure[]{\includegraphics[width=4.0cm]{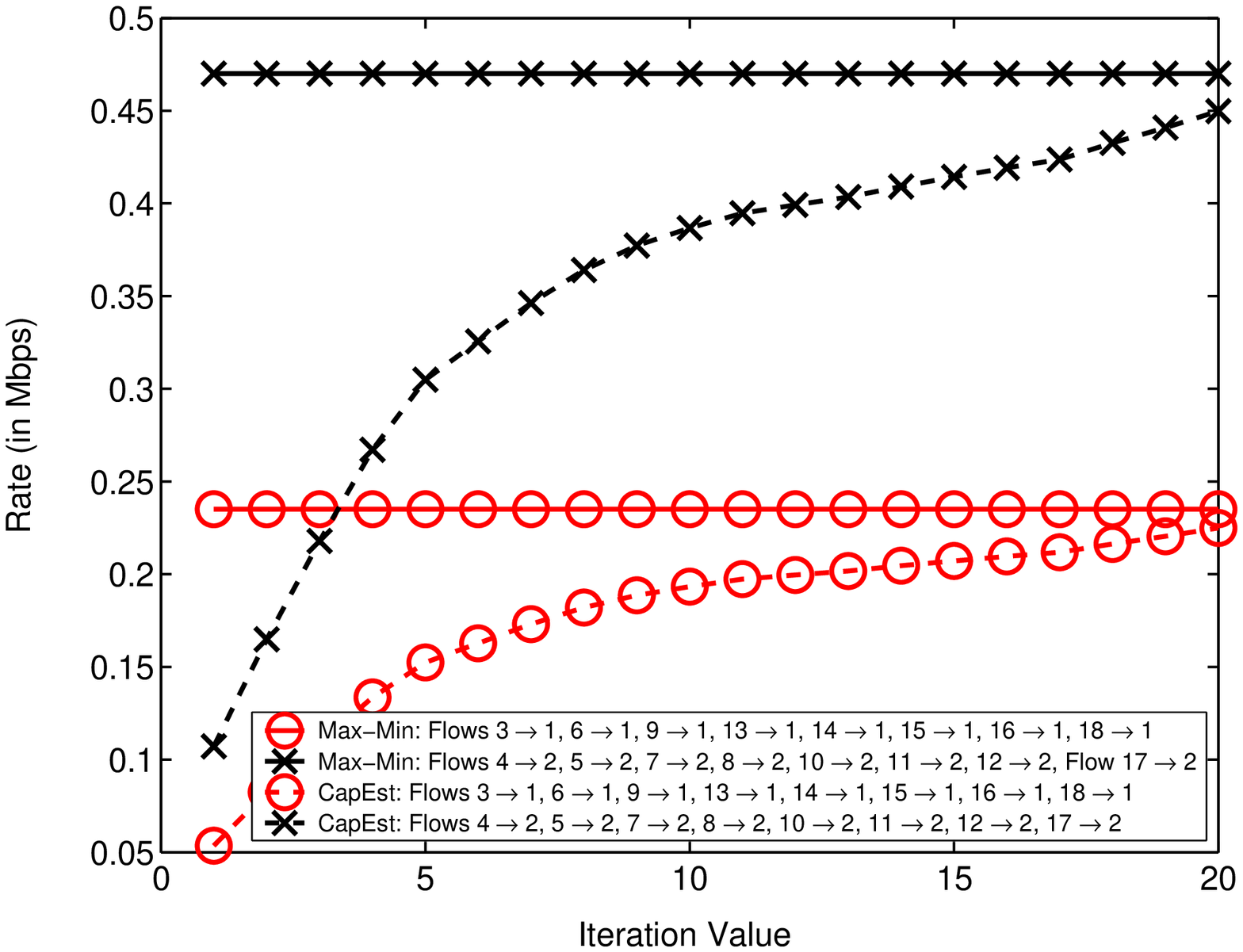}
\label{fig:plot_houston_weighted}}}
\caption{Performance of CapEst with weighted fairness. (a) Flow in the Middle. (b) Deployment at Houston.}
\end{figure}

\subsection{Weighted Fairness}
\label{sec:weighted}
Section~\ref{sec:max-min-calc} describes how to use CapEst to obtain max-min fairness. However, the same mechanism can easily be adapted to obtain weighted fairness,
which is defined as follows. Let $w_f > 0, \forall f \in \mathcal{F}$ denote the weight of a flow. Then, weighted fairness will allocate rates
$r_f, \forall f \in \mathcal{F}$ such that $r_{f_1}/r_{f_2} = w_{f_1}/w_{f_2}, \forall f_1, f_2 \in F$, the allocated flow rates are feasible, 
and the rate of no flow can be increased without reducing the rate of any other flow. 

Given $w_f, \forall f \in \mathcal{F}$, we now describe a modification of the methodology proposed in Section~\ref{sec:max-min-calc} to achieve weighted fairness. 
The central allocator will now update the value of $r_{i \rightarrow j}^{max}$ according to the following equation: 
$r_{i \rightarrow j}^{max} = r_{i \rightarrow j}^{allocate} + \frac{1/E\left[ S_{i \rightarrow j} \right] - \lambda_{i \rightarrow j}}
{\sum_{k \rightarrow l \in N_{i \rightarrow j}} \sum_{f \in \mathcal{F}} w_f I(f, k \rightarrow l)}$.
The equation to update the value of $r_{i \rightarrow j}^{allocate}$ remains the same,
and the new flow rates are updated as $r_f^{new} = \mbox{min}_{i \rightarrow j \in P_f} w_f r_{i \rightarrow j}^{allocate}$.

Figures~\ref{fig:plot_fim_weighted} and~\ref{fig:plot_houston_weighted} plot the evolution of flow rates with CapEst for the flow in the middle topology
and the deployment at Houston respectively where $w_{1 \rightarrow 3} = w_{7 \rightarrow 9} = 4$ and $w_{4 \rightarrow 6} = 1$ 
for the flow in the middle topology, while for the deployment at Houston, $w_{f} = 1$ for all flows $f$ being routed towards gateway $1$
and $w_f = 2$ for all flows $f$ being routed towards gateway $2$. CapEst converges to within $5\%$ of the optimal within $18$ iterations of the algorithm.

\begin{figure}
\centerline{\subfigure[]{\includegraphics[width=4.0cm]{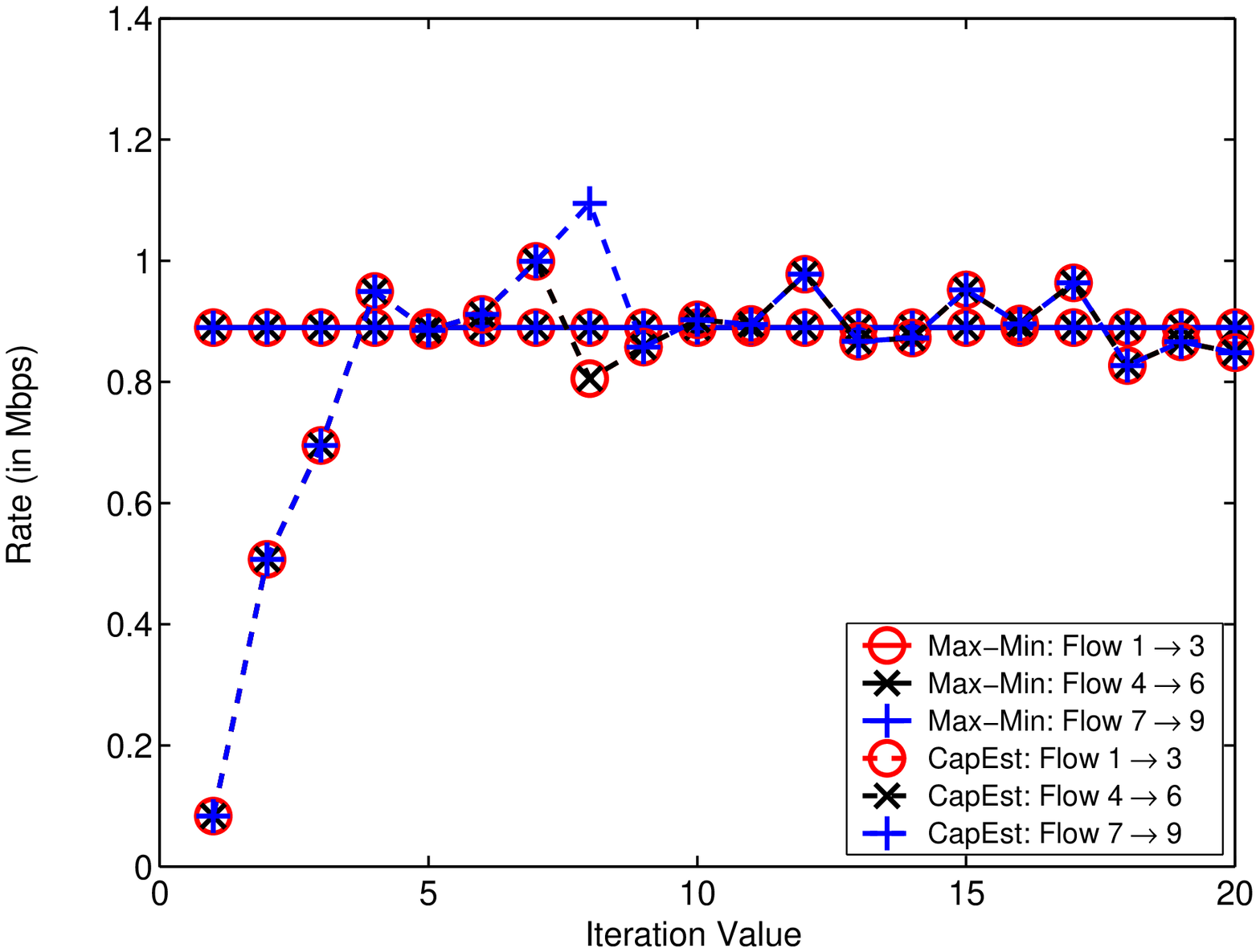}
\label{fig:plot_fim_finite}}
\hfil
\subfigure[]{\includegraphics[width=4.0cm]{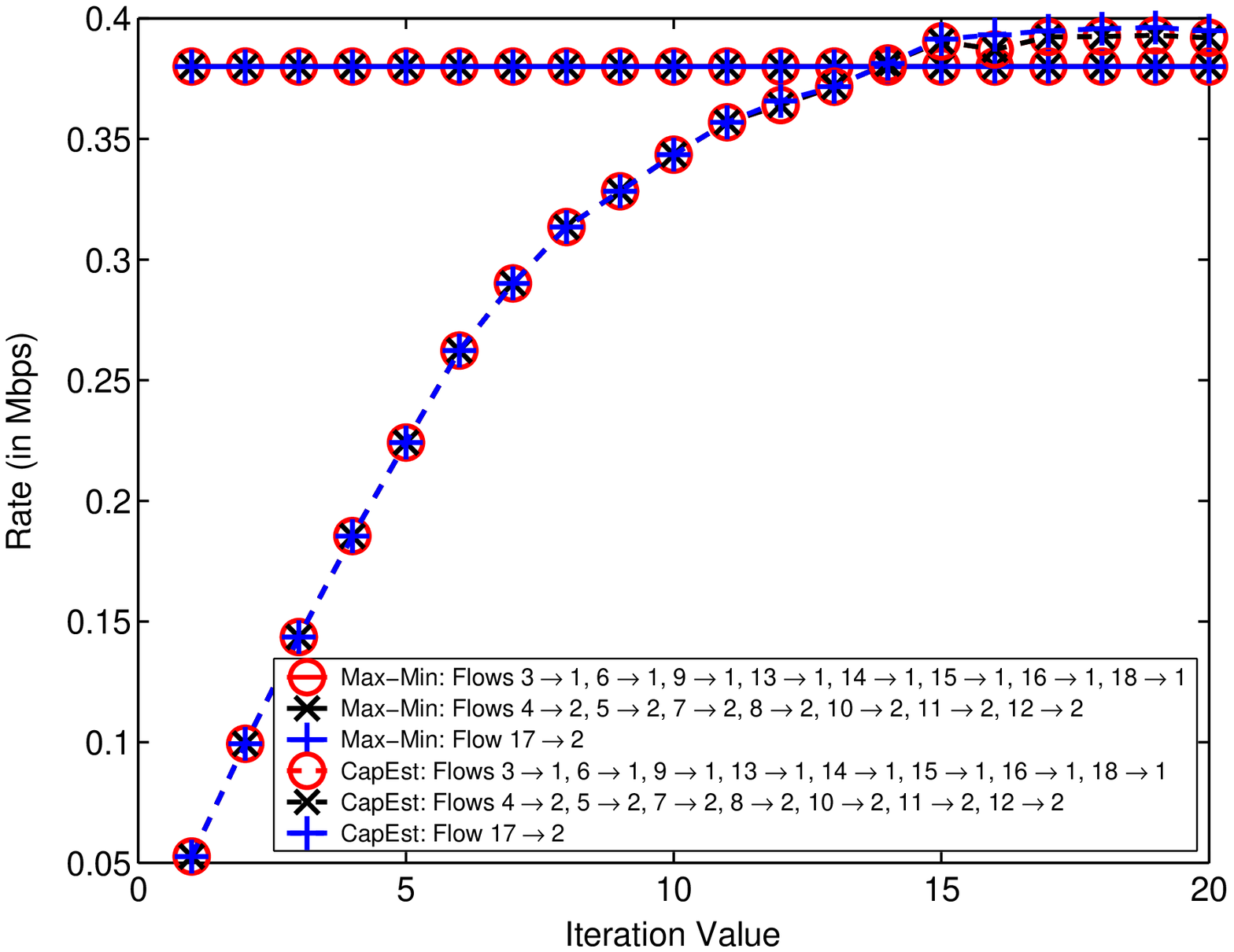}
\label{fig:plot_houston_finite}}}
\caption{Performance of CapEst with finite MAC retransmit limits. (a) Flow in the Middle. (b) Deployment at Houston.}
\end{figure}

\subsection{Finite Retransmit Limits}
\label{sec:sim_finite}
In this section, we set the IEEE 802.11 retransmit limits to their default values. Thus, packets may be dropped at the MAC layer.
We use the methodology proposed in Section~\ref{sec:finite} to update the estimate of the expected service time for lost packets. 
Figures~\ref{fig:plot_fim_finite}, and~\ref{fig:plot_houston_finite} show the evolution of allocated rates 
for the flow in the middle topology and the deployment at Houston respectively. 
There is slightly more variation in the allocated rates, however, not only does CapEst converge but also the convergence time
remains the same as before. We evaluate CapEst for all the other scenarios described above with the default retransmit values for IEEE 802.11,
and our observations remain the same as before.  

\begin{figure}
\centerline{\subfigure[]{\includegraphics[width=4.0cm]{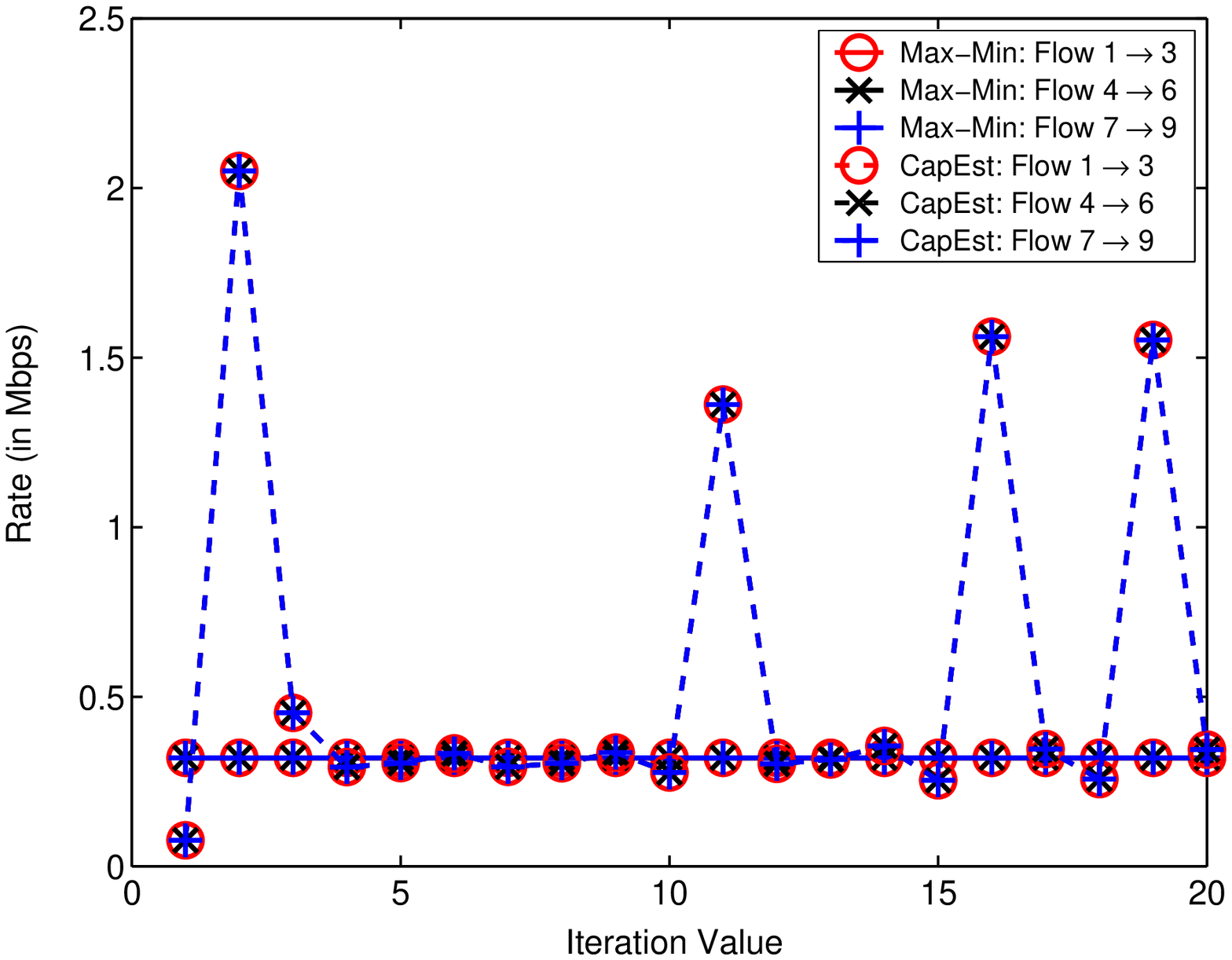}
\label{fig:auto1}}
\hfil
\subfigure[]{\includegraphics[width=4.0cm]{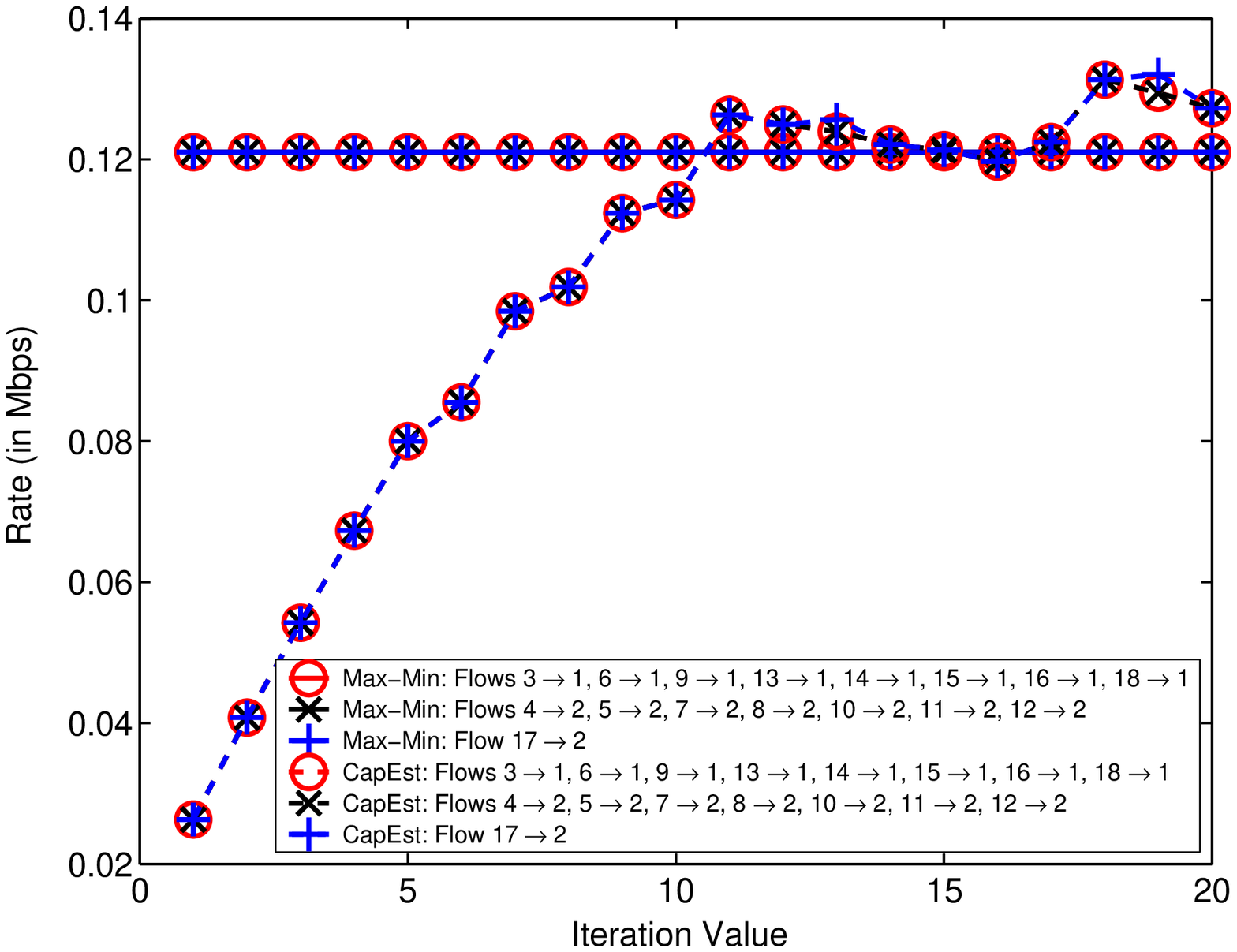}
\label{fig:auto2}}}
\caption{Performance of CapEst with auto-rate adaptation. (a) Flow in the Middle. (b) Deployment at Houston.}
\end{figure}

\subsection{Auto-Rate Adaptation}
\label{sec:adapt}
In this section, we switch on the default auto-rate fallback mechanism of Qualnet. 
We use the methodology proposed in Section~\ref{sec:auto-rate} to constrain the rate updates. 
Figures~\ref{fig:auto1} and~\ref{fig:auto2} show the evolution of allocated rates for the flow in the middle topology 
and the deployment at Houston respectively. Again, not only does CapEst converge to the correct rate allocation
but also the convergence time remains the same as before. However, the variation in the rates being assigned is larger than before. 
This is due to the fact that at lower rates, fewer collisions are observed, hence the data rate at each link is higher. 
Thus, the residual capacity estimate observed will be larger and hence, the variation in rates allocated will be larger.
In other words, the larger variation in rates is due to a larger variation in data rates caused by auto-rate fallback and not CapEst.
Also, this variation is less prominent in the real scenario, the deployment at Houston, where due to fading, there is significant channel losses in absence of collisions and
the expected service time estimate varies less due to collisions.
We evaluate CapEst for all the other scenarios described above with the default auto-rate fallback mechanism of Qualnet, and our observation remains the same.
 
\begin{figure}
\centerline{\subfigure[]{\includegraphics[width=4.0cm]{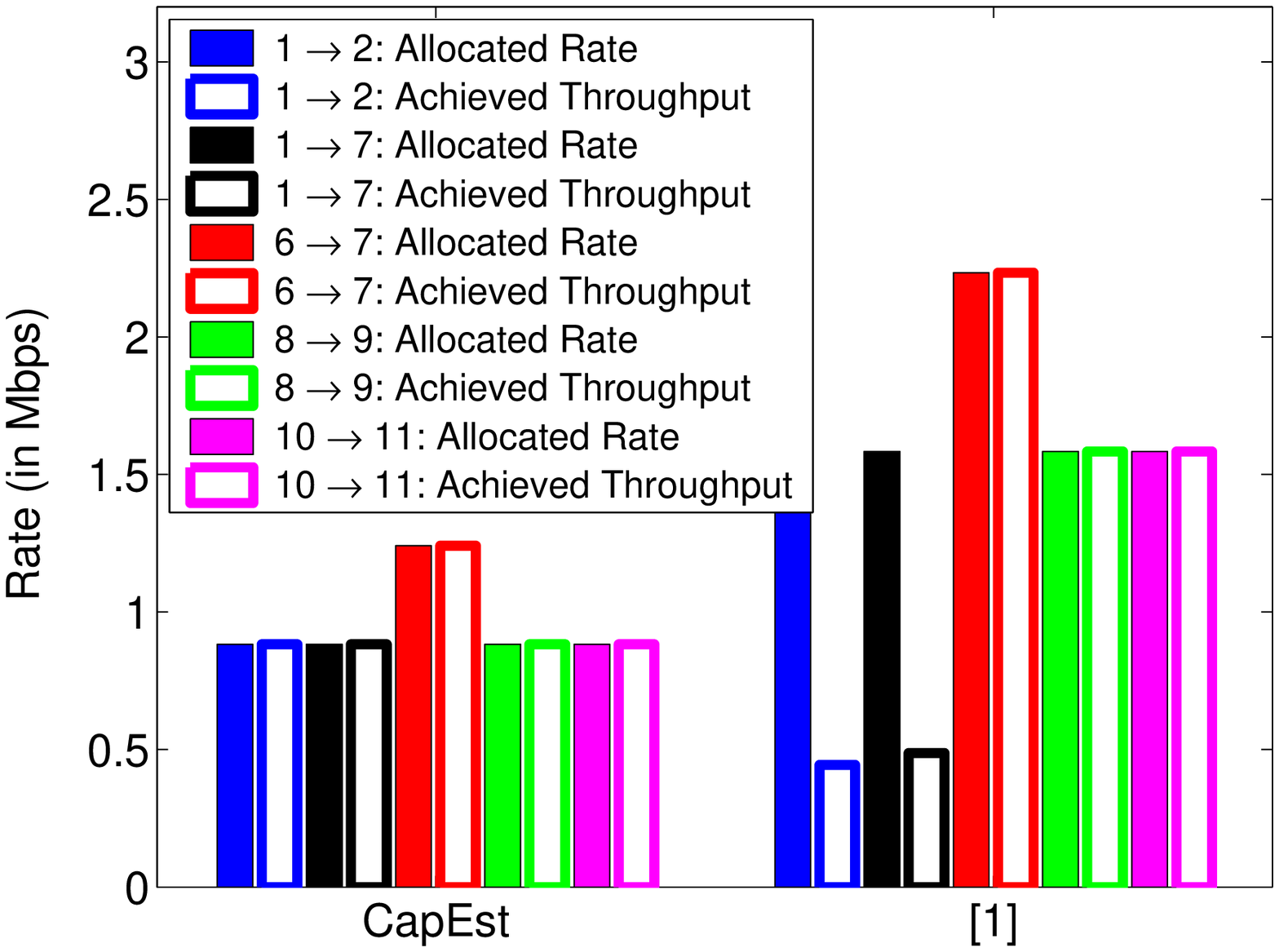}
\label{fig:sigcomm08_compare}}
\hfil
\subfigure[]{\includegraphics[width=4.0cm]{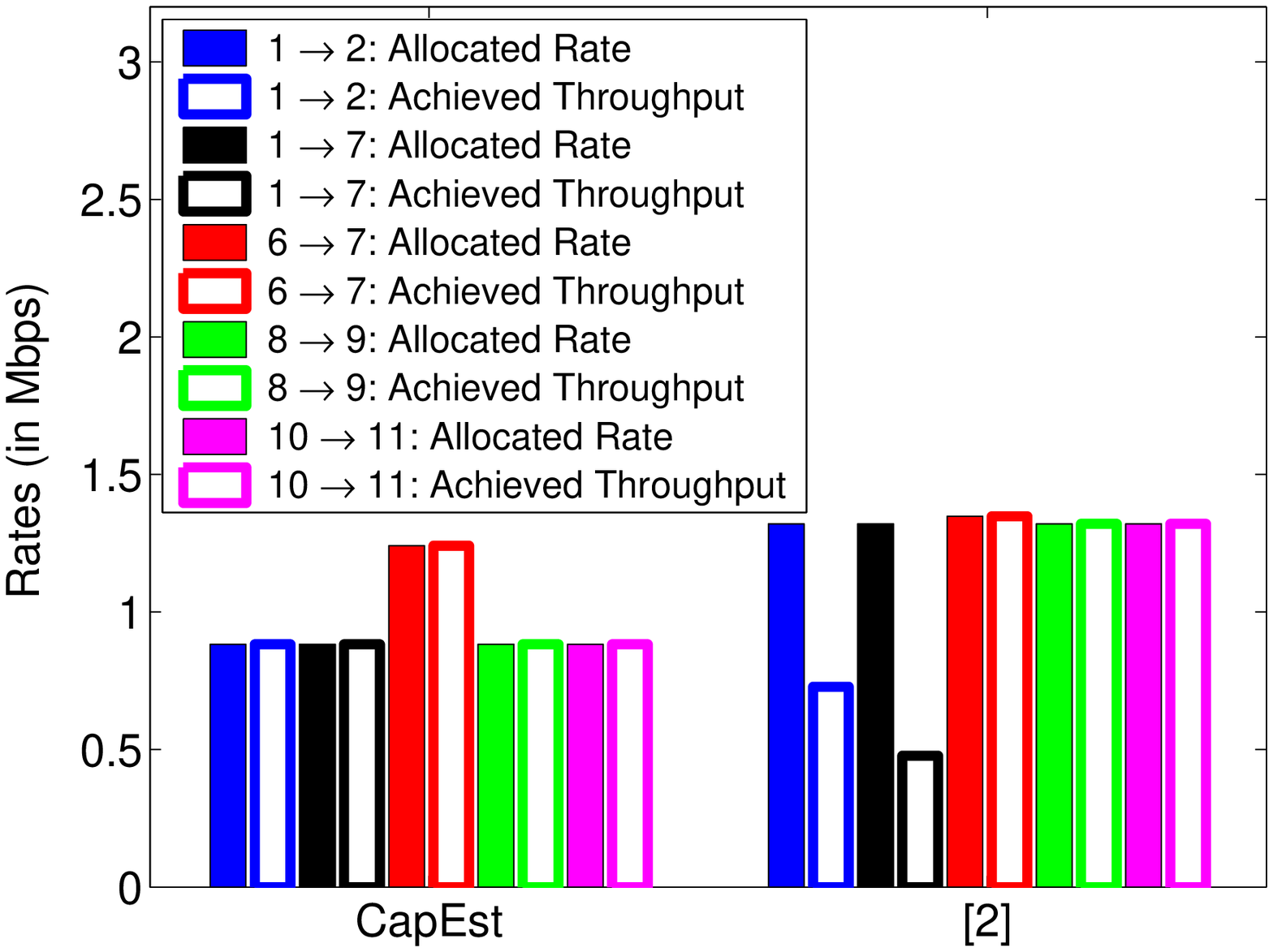}
\label{fig:conext_compare}}}
\caption{(a) CapEst vs~\cite{qiu:sigcomm}. (b) CapEst vs~\cite{ramesh:conext}.}
\end{figure}

\subsection{Comparison with Prior Works}
Two prior works have discussed centrally allocating max-min fair rates for a given topology~\cite{qiu:sigcomm,ramesh:conext}.
~\cite{qiu:sigcomm} assumes that the entire topology is precisely known, and then sets up a convex optimization problem based on an approximate model
for IEEE 802.11 DCF without RTS/CTS to determine the exact flow rates.~\cite{ramesh:conext} assumes that only which node interferes with
whom is known, and then sets up a linear program based on an approximate model for IEEE 802.11 DCF to determine the exact flow rates. 
Note that the model used by~\cite{qiu:sigcomm} is more complex as well as more accurate than~\cite{ramesh:conext}.
Both these methods are model-based, and hence suffer from all the shortcomings any model-based capacity estimation technique suffers from. 
Additionally, since they are based on approximate models, they will yield either an over-estimate or an under-estimate of the actual rates,
and as reported by~\cite{qiu:sigcomm,ramesh:conext}, this over or under estimation can be more than $40\%$ of the actual value. 

Figure~\ref{fig:sigcomm08_compare} compares the rates assigned as well as the actual throughput achieved 
by CapEst after $15$ iterations and the method proposed in ~\cite{qiu:sigcomm} (set-up to obtain max-min fairness) while Figure~\ref{fig:conext_compare} compares the same
for CapEst and the method proposed in~\cite{ramesh:conext} (set-up to obtain max-min fairness) for the chain-cross topology.
We observe that the methodologies of both the prior works overestimate capacity which leads to an infeasible rate allocation leading to a lot of packet drops and a 
much lower actual throughput value for the two flows in the middle ($1 \rightarrow 2$ and $1 \rightarrow 7$). 

Note that the initial flow rates can be assigned based on the solution of either of these two methods. However, after assigning these initial
rates, the CapEst mechanism can be used to converge to the correct flow rates. In other words, instead of starting CapEst from near-zero rates, 
its initial starting rate allocation can be determined using either one of these two methods. 

\subsection{Discussion}
\label{sec:sim_discuss}
\noindent {\bf Short Flows.} The $200$ packets in an iteration duration can belong to any flow, long-term or short-term.
Hence, CapEst does not need any additional support when numerous short flows are present in a network. 
For example, the centralized rate allocation mechanism can easily determine the rate of a new short flow based on the previous residual capacity estimate.
If we introduce $50$ short flows of $10$ packets each in the deployment at Houston topology, at randomly generated times after the $10^{th}$ iteration, in addition to the $16$ long flows,
and choose the source of each short flow randomly, we observe that each of these short flows are allocated adequate capacity and hence, they 
complete their data transfer quickly within $200$ ms. 

\noindent {\bf External Interference.} The presence of external interference merely increases the expected service time at links, which reduces 
the residual capacity estimate. Hence, unlike model-based capacity estimation techniques, CapEst does not need any additional
support or mechanisms to account for external interference. 

\section{Other Applications}
\label{sec:applications}
Accurately estimating capacity is an important tool which can be used in many different applications.
So far we have only used centralized rate allocation as an example to study the properties of CapEst. 
As discussed in the Introduction, CapEst can also be used with a number of other applications.
To illustrate how CapEst will get modified when used with these applications, in this section, we will explore how CapEst
fits into the following two applications: distributed rate allocation and interference-aware routing.

\subsection{Distributed Rate Allocation}
WCPCap is one of the recent distributed rate control algorithms for mesh networks which uses capacity estimation to allow the intermediate router nodes
to explicitly and precisely tell the source nodes the rate to transmit at~\cite{our:mobicom}.
It comprises of two parts. The first part estimates residual capacity at each link using a complex model for IEEE 802.11 with RTS/CTS in multi-hop networks, and the
second part divides this estimated capacity amongst end-to-end flows in a distributed manner to achieve max-min fairness. 
Since WCPCap uses a specific model to estimate capacity, it suffers from all the drawbacks which any model-based capacity estimation technique suffers from. 

To illustrate the utility of CapEst in distributed rate allocation, we combine CapEst and WCPCap. CapEst replaces the first part of WCPCap to estimate
capacity, while the second part which divides this estimated capacity is retained as such. We evaluate the performance of CapEst with WCPCap
using Qualnet simulations. Default parameters of IEEE 802.11 in Qualnet 
(summarized in Table~\ref{parameters}) 
are used. Retransmit limits are set to their default values and auto-rate adaptation is switched off\footnote{As stated earlier, model-based capacity estimation techniques, like
model-based WCPCap, become prohibitively expensive with auto-rate adaptation.}. 
The WCPCap parameters are set to their values used in~\cite{our:mobicom}. 
Figures~\ref{fig:fim_wcpcap} and~\ref{fig:cc_wcpcap} compare the performance of model-based WCPCap and CapEst-based WCPCap for the flow in the middle 
and the chain cross topologies respectively. We observe that WCPCap with CapEst is slightly fairer than the model-based WCPCap 
as model-based WCPCap is only as accurate as the underlying model. 
For example, for the chain-cross topology, CapEst allocates a higher rate to flows
$1 \rightarrow 2$, $1 \rightarrow 7$, $8 \rightarrow 9$ and $10 \rightarrow 11$ by reducing the rate of flow $6 \rightarrow 7$.

\begin{figure}
\centerline{\subfigure[]{\includegraphics[width=4.0cm]{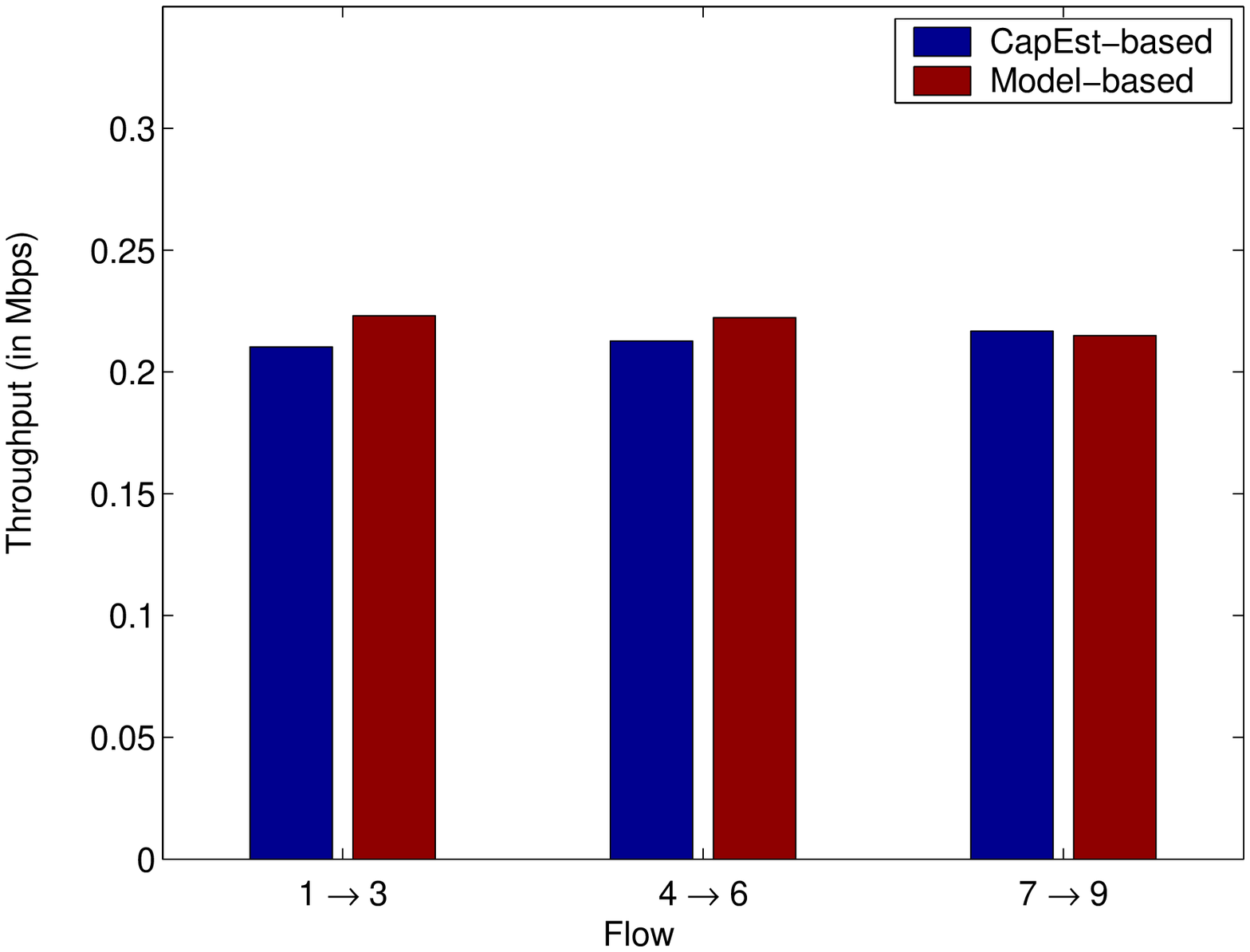}
\label{fig:fim_wcpcap}}
\hfil
\subfigure[]{\includegraphics[width=4.0cm]{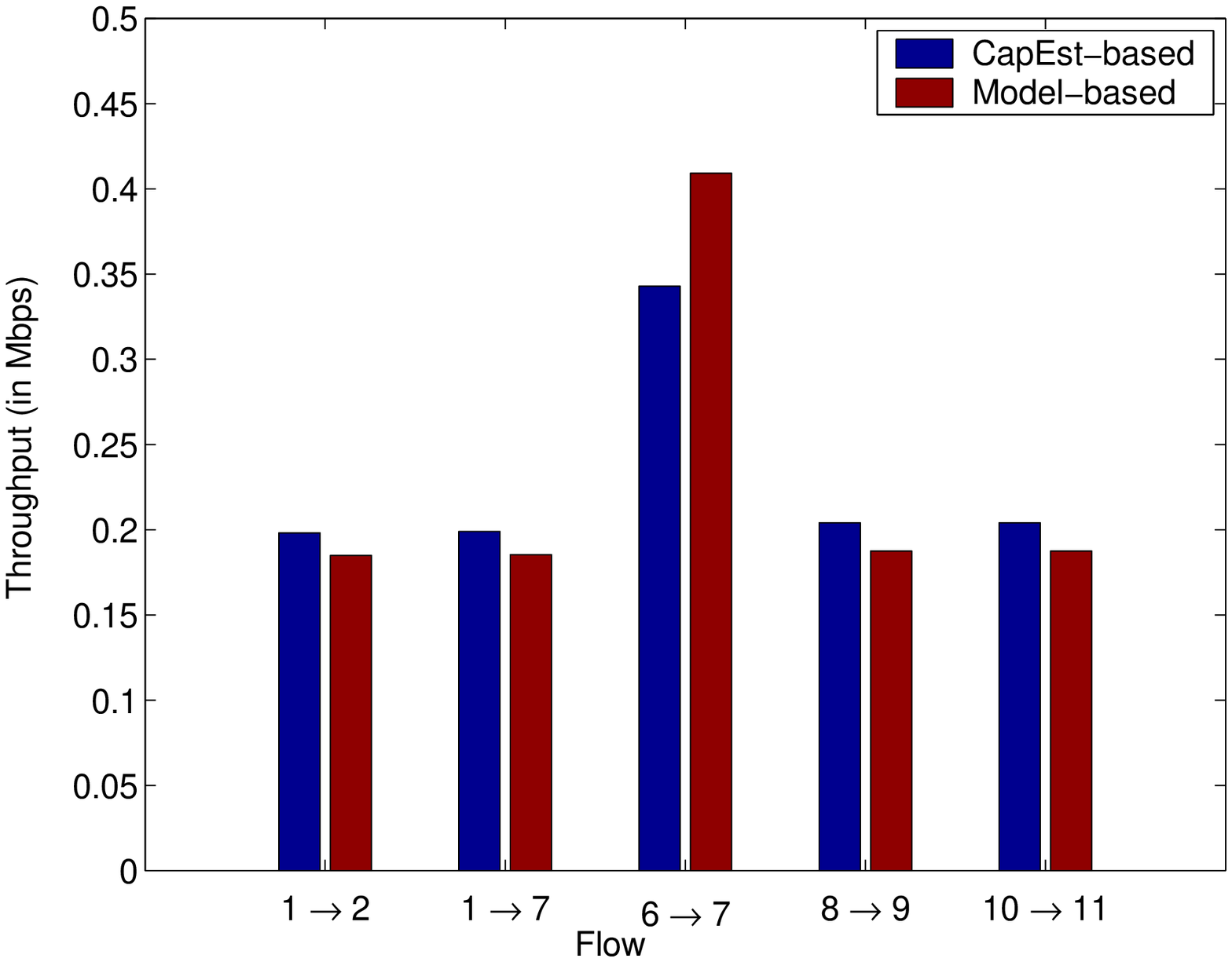}
\label{fig:cc_wcpcap}}}
\caption{Performance of CapEst implemented within WCPCap. (a) Flow in the middle topology. (b) Chain-cross topology.}
\end{figure}

\noindent {\bf Implementation Overhead:}
Implementation overhead for distributed rate allocation mechanisms which estimate capacity, like WCPCap, tends to be prohibitive~\cite{our:mobicom} because of
the following two reasons. (i) Complete topology information has to be collected and maintained to seed the model. 
(ii) The number of messages exchanged between links is a linear function of the number of interfering links each of the two links have. 
With CapEst, both these overheads are reduced significantly. First, as discussed earlier, the only topological information needed by a node is its neighboring nodes, 
which can be collected with very low overhead~\cite{ramesh:conext}. Second, the size of messages required is smaller, and similarly to prior works~\cite{cna:tech_report}, these messages
can be encoded in the IP header in the following fields: Type-of-Service, Identification, Flags and Fragmentation Offset. 
(This strategy is based on the observation that the amount of fragmented IP traffic is negligible~\cite{fragment:ton}.) 

\subsection{Interference-aware Routing}
\label{sec:routing}

To determine routes offering maximum capacity, Gao \etal~\cite{sigmetrics:802.11} discussed a model-based approach.
Given a path in a multi-hop network, the end-to-end throughput is determined by the minimum link capacity of the path.
So if one can figure out each link's capacity in a given path, then the minimum one is the end-to-end throughput capacity of this path. 
And amongst multiple paths between a source and a destination, the path which offers the maximum throughput is the best route.  

Since CapEst can be used to determine the capacity at each link, it can easily be used to determine the best route.
We use the same topology as used in~\cite{sigmetrics:802.11} (Figure~\ref{fig:routing_top}) to evaluate CapEst. 
Flow 2 (between $1$ and $5$) has two candidate paths. Path 1 is $1 \rightarrow 2 \rightarrow 3 \rightarrow 4 \rightarrow 5$, 
and path 2 is $1 \rightarrow 6 \rightarrow 7 \rightarrow 8 \rightarrow 9 \rightarrow 10 \rightarrow 5$. Assuming default IEEE 802.11 parameters, 
summarized in Table~\ref{parameters}, 
data rate of $11$ Mbps, packet size of $1024$ bytes and the rate of existing flow $1$ (between $11$ and $12$) to be $3$ Mbps, 
we evaluate the capacity offered by both paths by sending $200$ packets back-to-back along each path, to be $0.36$ Mbps and $1.28$ Mbps respectively.
Thus, path $2$ is a much better route even though it has more hops than path $1$.
Using CapEst, we arrive at the same conclusion as~\cite{sigmetrics:802.11} with a much smaller overhead and without requiring any complex computations
or any assumptions on the MAC/PHY layer being used.

\section{Conclusions}
\label{sec:conclusions}

In this paper, we propose CapEst, a mechanism to estimate link capacity in a wireless network. 
CapEst yields accurate estimates while being very easy to implement and does not require any complex computations.
CapEst is measurement-based and model-independent, hence, works for any MAC/PHY layer. CapEst can be easily modified to work with any application
which requires an estimate of link capacity. Also, the implementation overhead of CapEst is small and it does not lose accuracy when used with auto-rate
adaptation and finite MAC retransmit limits. Finally, CapEst requires no support from the underlying chipset and can be completely implemented
at the network layer.


\small
\bibliographystyle{IEEEtran}
\bibliography{my}  
\normalsize
\appendix
In this appendix, we prove Theorem~\ref{thm:convergence}. We use the small buffer model for the single-hop WLAN for homogeneous non-saturated nodes proposed 
in Zhao \etal~\cite{zhao:fixed_point}. The following set of equations describe the model.
\begin{eqnarray}
\label{eqn1} & & \beta = \lambda \sigma + \lambda \gamma \left( \sigma + \frac{n}{n-1} T_s \right)  \\
\label{eqn2} & & \gamma = 1 - \left( 1 - \beta \right)^{n-1} \\
\label{eqn3} & & S = \left( b_0 + b_1 \gamma + \ldots b_k \gamma^k \right) \left( \sigma + \frac{n}{n-1} \gamma T_s \right),  
\end{eqnarray}
where $\beta$ denotes the attempt rate per slot for each node, $\lambda$ denotes the arrival rate at each node, $\sigma$ denotes the slot duration,
$\gamma$ denotes the collision probability, $n$ denotes the number of nodes, $T_s$ denotes the average packet transmission time,
$S$ denotes the expected service time, $b_0, b_1, \ldots b_k$ denotes the expected back-off window size and $k$ is the retransmission limit.

In addition to the equations governing the model, we also have the following equation. 
\begin{equation} \label{eqn:lambda} \lambda = \Psi(\lambda) = 1/S.
\end{equation}

This set of equations is a fixed point formulation; and CapEst iteratively solves this fixed point formulation. 
To prove Theorem~\ref{thm:convergence}, we will first prove that for the given formulation, a fixed point exists and is unique.
We will then show that the CapEst mechanism iteratively converges to this fixed point.
(Note that the main difference between the analysis we present and the analysis presented in~\cite{zhao:fixed_point} (on which our proofs are based)
is that they treat $\lambda$ as a constant and do not include Equation (\ref{eqn:lambda}) in their formulation.)

\begin{theorem}
The fixed point formulation governed by Equations (\ref{eqn1})-(\ref{eqn3}) and (\ref{eqn:lambda}) always has a unique solution.
\end{theorem}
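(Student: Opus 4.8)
The plan is to collapse the four coupled equations into a single scalar fixed-point equation in $\lambda$ and then exploit monotonicity. First I would observe that Equations (\ref{eqn1}) and (\ref{eqn2}) alone, with $\lambda$ treated as a parameter, constitute an inner fixed-point problem for the collision probability $\gamma$: substituting (\ref{eqn1}) into (\ref{eqn2}) yields $\gamma = F_\lambda(\gamma)$, where $F_\lambda(\gamma) = 1 - \left(1 - \lambda\sigma - \lambda\gamma\left(\sigma + \frac{n}{n-1}T_s\right)\right)^{n-1}$. Following Zhao \etal~\cite{zhao:fixed_point}, $F_\lambda$ is increasing and concave in $\gamma$ on the admissible range, with $F_\lambda(0) \ge 0$ (strictly positive for $\lambda > 0$) and $F_\lambda(1) < 1$; a concave curve that starts on or above the diagonal and ends strictly below it crosses exactly once, so the inner problem has a unique solution $\gamma(\lambda)$. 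Feeding this into (\ref{eqn3}) then defines $S(\lambda)$, and hence $\Psi(\lambda) = 1/S(\lambda)$, as a genuine single-valued, continuous function of $\lambda$ on its domain.

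The heart of the argument is monotonicity. I would show that $\gamma(\lambda)$ is strictly increasing in $\lambda$: since $F_\lambda(\gamma)$ is increasing in $\lambda$ for each fixed $\gamma$, raising $\lambda$ lifts the entire curve upward, and because the unique crossing is a point where $F_\lambda$ passes from above the diagonal to below it, the crossing point must move to the right. Given this, $S(\lambda)$ is strictly increasing, because both factors in (\ref{eqn3}) --- the polynomial $b_0 + b_1\gamma + \cdots + b_k\gamma^k$ and the term $\sigma + \frac{n}{n-1}\gamma T_s$ --- are increasing in $\gamma$, the $b_i$ being nonnegative back-off sizes. Consequently $\Psi(\lambda) = 1/S(\lambda)$ is continuous and strictly decreasing.

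With monotonicity in hand, existence and uniqueness reduce to a one-dimensional argument. I would set $g(\lambda) = \lambda - \Psi(\lambda)$. Since $\lambda$ is strictly increasing and $-\Psi$ is strictly increasing, $g$ is strictly increasing and therefore has at most one zero, which settles uniqueness immediately. For existence, note that $g(0) = -1/S(0) < 0$ because $S(0)$ is finite and positive, while any solution must satisfy $\lambda = \Psi(\lambda) \le \Psi(0) = 1/S(0)$, so it suffices to work on the compact interval $[0, 1/S(0)]$; there $g(1/S(0)) = 1/S(0) - \Psi(1/S(0)) \ge 1/S(0) - \Psi(0) = 0$, and the intermediate value theorem then forces $g$ to vanish at exactly one point, which is the claimed $\lambda_{fin}$.

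The step I expect to be the main obstacle is the comparative-statics claim that $\gamma(\lambda)$ is monotone in $\lambda$. Zhao \etal only analyze the inner system for a fixed arrival rate, so the dependence of $\gamma$ on $\lambda$ is precisely the new ingredient introduced by Equation (\ref{eqn:lambda}); making it rigorous requires either an implicit-function computation establishing $d\gamma/d\lambda > 0$ at the crossing, or a careful monotone-sandwich argument, together with a verification that the admissible domain (where $1 - \beta \ge 0$ and the model remains physically meaningful) is an interval on which all these monotonicity properties persist.
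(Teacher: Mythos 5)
Your proposal is correct in substance, but it takes a genuinely different route from the paper's. The paper keeps the attempt rate $\beta$ as the fixed-point variable, writes the whole system as $\beta = f(\beta)$, obtains existence from the Brouwer fixed point theorem, and obtains uniqueness by adapting Zhao \etal's criterion to the decreasing case, which forces it to verify both $f'(\beta) < 0$ and $f''(\beta) < 0$ through long derivative expansions. You instead nest the fixed points: for fixed $\lambda$, Equations (\ref{eqn1})--(\ref{eqn2}) are exactly the setting analyzed in~\cite{zhao:fixed_point}, so the increasing-and-concave argument gives a unique $\gamma(\lambda)$ essentially for free, and the genuinely new content of the theorem, Equation (\ref{eqn:lambda}), becomes a scalar equation $\lambda = \Psi(\lambda)$ that you dispatch with first-order information only. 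Your lifting argument for the comparative statics is rigorous as sketched: if $\lambda_2 > \lambda_1$, then $F_{\lambda_2}(\gamma(\lambda_1)) > F_{\lambda_1}(\gamma(\lambda_1)) = \gamma(\lambda_1)$, and since the set where $F_{\lambda_2}(\gamma) > \gamma$ is precisely $[0, \gamma(\lambda_2))$ (by concavity of $F_{\lambda_2}(\gamma)-\gamma$), this forces $\gamma(\lambda_1) < \gamma(\lambda_2)$; monotonicity of $S$ and hence strict monotonicity of $\lambda - \Psi(\lambda)$ follow, giving at most one root, and the intermediate value theorem gives existence. What your route buys: no second derivatives at all for the outer problem, a clean separation of the known inner result from the new outer one, and monotonicity arguments that are far harder to get wrong than the paper's symbol-heavy concavity computation (whose step claiming the coefficient of $A^2$ vanishes because $b_k = 2^k b_0$ is difficult to verify). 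What the paper's route buys: the derivative facts $\gamma' \geq 0$, $S' \geq 0$, $\lambda' \leq 0$ it establishes are reused in its subsequent convergence theorem, so the computational work is amortized.

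The one step you must actually patch is the right endpoint of your IVT interval. Your endpoint $1/S(0) = 1/(b_0\sigma)$ can lie outside the region where the model is meaningful: since $T_s \gg \sigma$, the inner fixed point can drive $\beta$ to $1$ at some $\bar{\lambda} < 1/(b_0\sigma)$, beyond which $\Psi$ is undefined. The fix is short. If $\Psi$ is defined on $[0, 1/S(0)]$, run your argument as written. Otherwise, at the boundary $\bar{\lambda}$ one has $\gamma \to 1$ and $\beta \to 1$, so
\begin{equation*}
\bar{\lambda}\left(2\sigma + \tfrac{n}{n-1}T_s\right) = 1,
\qquad
\Psi(\bar{\lambda}) = \frac{1}{\left(\sum_{i=0}^{k} b_i\right)\left(\sigma + \tfrac{n}{n-1}T_s\right)},
\end{equation*}
and since $\sum_{i=0}^{k} b_i \geq b_0 + b_1 = 3b_0 \geq 3$, the denominator exceeds $2\sigma + \tfrac{n}{n-1}T_s$, so $\Psi(\bar{\lambda}) < \bar{\lambda}$; hence $g(\bar{\lambda}) > 0$ and the intermediate value theorem applies on $[0, \bar{\lambda}]$ instead. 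With that patch your argument is complete, and it is in fact more careful about the admissible domain than the paper, which never delimits the range of $\beta$ on which its derivative computations are valid.
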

\begin{proof}
We denote the fixed point formulation represented by  Equations (\ref{eqn1})-(\ref{eqn3}) and (\ref{eqn:lambda}) as $\beta = f(\beta)$. 
Since the fixed point equations are continuous, by Brouwer fixed point theorem, the fixed point must exist. To prove the uniqueness of this fixed point,
we use the technique proposed in~\cite{zhao:fixed_point}, which proves that if $f(\beta)$ is monotonically increasing and concave, the fixed point is unique.
A similar proof technique also establishes that if $f(\beta)$ is monotonically decreasing and concave, the fixed point will still be unique.
Thus to prove uniqueness, we need to prove that $f'(\beta) < 0$ and $f''(\beta) < 0$, where all derivatives are with respect to $\beta$.

(i) We first prove that $f'(\beta) < 0$. Derivating Equations (\ref{eqn2}), (\ref{eqn3}) and (\ref{eqn:lambda}) with respect to $\beta$ yields:
\begin{eqnarray}
& & \gamma' = (n-1) (1 - \beta)^{n-2} \geq 0  \\
& & S' = \gamma' \left[ \left( b_1 + \ldots k b_k \gamma^k \right) \left( \sigma + \frac{n}{n-1} T_s \gamma \right) \right. + \nonumber \\
& & \left. \left( b_0 + \ldots b_k \gamma_k\right)\frac{n}{n-1} T_s \right] \geq 0 \\
& & \lambda' =  -S'/S^2 \leq 0. 
\end{eqnarray}

Derivating Equation (\ref{eqn1}) yields $f'(\beta) = $ \newline $\frac{-1}{S^2} \left[ \left( \sigma + \left( \sigma + \frac{n}{n-1} T_s \right) \gamma 
\right) S' - S \gamma' \left( \sigma + \frac{n}{n-1} T_s \right) \right]$ \newline $= \frac{-1}{S^2} \left[ \sigma S' + \left( \sigma + \frac{n}{n-1} T_s \right) 
\left( \gamma S' - S \gamma' \right) \right]$. We next show that 
$\sigma S' + \left( \sigma + \frac{n}{n-1} T_s \right) $ $\left( \gamma S' - S \gamma' \right) > 0$. 
Expanding all the derivatives yields 
\begin{eqnarray}
& & d\gamma/d\beta \left[ \sigma \left( \left( 2 b_2 \gamma + \ldots 
k b_k \gamma^k \right) \left( \sigma + \frac{n}{n-1} T_s \gamma \right) + \right. \right. \nonumber \\ 
& & \left. \left( b_0 + \ldots b_k \gamma_k\right)\frac{n}{n-1} T_s \right) 
+ \left( \sigma + \frac{n}{n-1} T_s \right) \left( \left( b_2 \gamma^2  \right. \right. \nonumber \\ 
& & \left. + \ldots (K-1) b_k \gamma^k \right) \left( \sigma + \frac{n}{n-1} T_s \gamma \right) 
+ \left( b_0 + \ldots b_k \gamma_k \right) \nonumber \\ 
& & \left. \left. \frac{n}{n-1} T_s + b_1 \sigma \frac{n}{n-1} T_s \gamma (b_1 - b_0) \sigma^2 \right) \right] > 0. \nonumber
\end{eqnarray}
Then, since $S^2$ is positive, $f'(\beta) < 0$. 

(ii) We next prove that $f''(\beta) < 0$. Derivating Equations (\ref{eqn2}), (\ref{eqn3}) and (\ref{eqn:lambda}) twice with respect to $\beta$ yields:
\begin{eqnarray}
& & \gamma'' = -\left(n-2\right) \left( n-1 \right) \left(1 - \beta \right)^{n-3} \leq 0  \\
& & S'' =  \gamma'' \left[ \left( b_1 + \ldots + k b_k \gamma^{k-1} \right) \left( \sigma + \frac{n}{n-1} T_s \gamma \right) \right. \nonumber \\ 
& & \left. + \left( b_0 + \ldots + b_k \gamma^k \right) \frac{n}{n-1} T_s \right] + \left( \gamma' \right)^{2}\left[ \left( 2 b_2 + \ldots + \right. \right. \nonumber \\ 
& & \left. k (k-1) b_k \gamma^{k-2} \right) \left( \sigma + \frac{n}{n-1} T_s \gamma \right) + 2 \left( b_1 + \ldots + \right. \nonumber \\
& & \left. \left. k b_k \gamma^{k-1} \right) \frac{n}{n-1} T_s  \right]  \\
& & \lambda'' = \frac{S S'' - 2 \left( S' \right)^2}{S^3} 
\end{eqnarray}

Derivating Equation (\ref{eqn1}) yields $f''(\beta) = \sigma \lambda'' + \left( \lambda \gamma''\right.$ \newline $\left.  
+ 2 \lambda' \gamma' + \lambda'' \gamma \right) \left( \sigma + \frac{n}{n-1} T_s \right)$. Since $\gamma \geq 0, \lambda \geq 0, \gamma' \geq 0, 
\lambda' \leq 0$ and $\gamma'' \leq 0$, $\lambda'' < 0$ will imply $f''(\beta) < 0$. 
Since $S>0$, to show that $\lambda'' < 0$, we will have to prove that $SS'' - 2 \left( S' \right)^2 < 0$.

Let $A = \left( \sigma + \frac{n}{n-1} T_s \gamma \right)$. Then,
\begin{eqnarray}
& & SS'' - 2 \left( S' \right)^2 = X_0 + \left( \gamma' \right)^2 \left[ A^2 \left( b_0 + \ldots + b_k \gamma^k \right) \right. \nonumber \\ 
& & \left( 2 b_2 + \ldots + k (k-1) b_k \gamma^{k-2} \right) + 2 A \frac{n}{n-1} T_s \left( b_1 + \ldots + \right. \nonumber \\
& & \left. k b_k \gamma^{k-1} \right) \left( b_0 + \ldots + b_k \gamma^k \right) -2 A^2 \left( b_1 + \ldots + k b_k \gamma^{k-1} \right)^2 \nonumber \\
& & \left. -4 A T_s \frac{n}{n-1} \left( b_0 + \ldots + b_k \gamma^k \right) \left( b_1 + \ldots + k b_k \gamma^{k-1} \right) \right. \nonumber \\
\label{eqn5} & & \left. + X_1 \right] 
\end{eqnarray} 
where $X_0 \leq 0$ and $X_1 \leq 0$. The coeffecient of $A^2$ in Equation (\ref{eqn5}) is $0$ as $b_k = 2^k b_0$ and the coeffecient of
$2 A T_s \frac{n}{n-1} < 0$. Hence, $SS'' - 2 \left( S' \right)^2 < 0$ which implies $f''(\beta) < 0$.
\end{proof}

We next prove that the iterative method used in CapEst converges to this unique fixed point. 
$\Psi \left( \lambda \right) = $ \newline $\left( b_0 + b_1 \gamma + \ldots + b_k \gamma^k \right)^{-1} \left( \sigma + \frac{n}{n-1} \gamma T_s \right)^{-1}$.
At iteration $k$, $k \geq 1$, if CapEst is used to get max-min fairness, then $\lambda$ is updated as follows: 
\begin{equation} 
\label{eqn:ru}
\lambda_{k+1} =  \left( 1 - \alpha \right) \Psi\left( \lambda_k \right) + \alpha \lambda_k,
\end{equation} 
where $\alpha = 1 - \frac{1}{\sum_{k \rightarrow l \in N_{i \rightarrow j}} \sum_{f \in \mathcal{F}} I(f, k \rightarrow l)}$. 
Recall that $I(f, i \rightarrow j)$ is an indicator variable which is equal to $1$ only if flow $f \in F$ passes through the link
$i \rightarrow j \in \mathcal{L}$, otherwise it is equal to $0$. Hence, for a WLAN, $0 < \alpha = 1/n < 1$.
(Note that the iterative mechanism will start from an initial value of $\lambda$ which we label $\lambda_0$.)
The following theorem proves that for any $0 < \alpha < 1$, the CapEst mechanism will converge to the unique fixed point. 

\begin{theorem}
For the fixed point formulation described by Equations (\ref{eqn1})-(\ref{eqn3}) and (\ref{eqn:ru}), and for any value of $0 < \alpha < 1$, 
if $\Psi\left( \lambda_0 \right) > \lambda_0$, then the CapEst mechanism iteratively converges to the unique fixed point.
\end{theorem}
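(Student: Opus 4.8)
The plan is to recast the update (\ref{eqn:ru}) as a scalar fixed-point iteration $\lambda_{k+1}=g(\lambda_k)$ with $g(\lambda)=(1-\alpha)\Psi(\lambda)+\alpha\lambda$, and to exploit the monotonicity of $\Psi$ that falls out of the preceding uniqueness theorem. First I would observe that $g(\lambda)=\lambda$ if and only if $\Psi(\lambda)=\lambda$, so $g$ and $\Psi$ share exactly one fixed point $\lambda^{*}$, whose existence and uniqueness is already guaranteed. Next I would show that $\Psi$ is continuous and strictly decreasing: raising $\lambda$ raises the attempt rate $\beta$ through (\ref{eqn1}), raises the collision probability $\gamma$ through (\ref{eqn2}), and raises the service time $S$ through (\ref{eqn3}) because every coefficient $b_i$ is positive. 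The $\gamma$ and $S$ monotonicities are precisely the sign facts $\gamma'\ge 0$ and $S'\ge 0$ derived in the uniqueness proof, while monotonicity of $\beta$ in $\lambda$ comes directly from (\ref{eqn1}). Consequently $\phi(\lambda):=\Psi(\lambda)-\lambda$ is strictly decreasing with a single zero at $\lambda^{*}$, so the hypothesis $\Psi(\lambda_0)>\lambda_0$ is exactly the statement that we start strictly below the fixed point, $\lambda_0<\lambda^{*}$ (the regime in which CapEst is initialized from near-zero rates).

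With this reduction the remaining task is to prove $\lambda_k\to\lambda^{*}$, and the natural tool is a contraction estimate on $g$. Since $\Psi'<0$, the upper bound $g'(\lambda)=\alpha+(1-\alpha)\Psi'(\lambda)<\alpha<1$ is immediate and costs nothing; the content lies entirely in the matching lower bound $g'(\lambda)>-1$, equivalently $|\Psi'(\lambda)|<\tfrac{1+\alpha}{1-\alpha}$. In the benign sub-case $|\Psi'|<\tfrac{\alpha}{1-\alpha}$ the map $g$ is itself increasing, so from $\lambda_0<\lambda^{*}$ with $g(\lambda_0)>\lambda_0$ the iterates form a monotonically increasing sequence bounded above by $\lambda^{*}$ — an invariant one checks by induction using $g(\lambda^{*})=\lambda^{*}$ and monotonicity of $g$ — and convergence to $\lambda^{*}$ follows from monotone convergence together with uniqueness of the fixed point. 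In the general case $g$ may decrease and overshoot $\lambda^{*}$, so I would instead argue on the natural invariant interval $[\lambda_0,\Psi(\lambda_0)]$, which brackets $\lambda^{*}$, and conclude geometric convergence via a Banach-type contraction once $|g'|<1$ is secured there.

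The main obstacle is therefore the steepness bound on $\Psi$, i.e. ruling out a nondegenerate period-two orbit $a<\lambda^{*}<b$ of $g$ that would straddle the fixed point. This is where the concavity information $f''(\beta)<0$ from the uniqueness proof must be brought to bear: I would propagate the explicit expressions for $\gamma'$, $S'$ and $S''$ through $\Psi=1/S$ to bound $|\Psi'|$ on $[\lambda_0,\Psi(\lambda_0)]$, and in particular establish $|\Psi'|\le 1$ there. Because $\Psi'<0$ strictly, such a bound forces $g'(\lambda)=\alpha-(1-\alpha)|\Psi'(\lambda)|$ to lie in $(2\alpha-1,\alpha)\subset(-1,1)$ simultaneously for every $\alpha\in(0,1)$, which is exactly the uniform-in-$\alpha$ contraction the theorem needs and closes the argument. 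I expect the laborious part to be carrying the positivity of the $b_i$ and the relation $b_k=2^{k}b_0$ — already the delicate ingredients of the uniqueness proof — through these second-derivative estimates to pin down the bound on $|\Psi'|$.
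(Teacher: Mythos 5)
Your reduction to the scalar iteration $\lambda_{k+1}=g(\lambda_k)$ with $g=(1-\alpha)\Psi+\alpha\,\mathrm{id}$, the observation that $\Psi$ is continuous and strictly decreasing, the equivalence $\Psi(\lambda_0)>\lambda_0\Leftrightarrow\lambda_0<\lambda^{*}$, and the monotone-convergence argument in the ``benign'' regime $|\Psi'|<\alpha/(1-\alpha)$ are all sound. But the proposal does not close: the entire general case rests on the steepness bound $|\Psi'|\le 1$ (equivalently $|g'|<1$) on $[\lambda_0,\Psi(\lambda_0)]$, and you never prove it --- you only express the expectation that it will fall out of the derivative computations of the uniqueness theorem. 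It will not. Those computations deliver sign information ($\gamma'\ge 0$, $S'\ge 0$, $f'(\beta)<0$, $f''(\beta)<0$) and, in the paper's convergence proof, a \emph{lower} bound $|d\Psi/d\lambda|\ge C_0$ --- the opposite direction from what you need. Nor is there a physical reason for $|\Psi'|\le 1$ to hold uniformly: writing $d\Psi/d\lambda=-S'/S^{2}$ and solving (\ref{eqn1})--(\ref{eqn2}) for $d\gamma/d\lambda$ produces a denominator of the form $1-\lambda\left(\sigma+\tfrac{n}{n-1}T_s\right)(n-1)(1-\beta)^{n-2}$, which can become arbitrarily small near the saturation knee, so the slope of $\Psi$ there grows with $n$ and $T_s/\sigma$; the bound $|\Psi'|\le 1$ is a genuine quantitative assumption about the operating regime, not a consequence of the model. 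The sketch is also circular at the level of the invariant interval: to know the iterates stay in $[\lambda_0,\Psi(\lambda_0)]$ (i.e.\ to rule out overshoot below $\lambda_0$, your ``period-two orbit'') you already need the derivative bound on that interval, while the derivative bound is only asserted there.

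For comparison, the paper's own proof takes a different route: it lower-bounds $|d\Psi/d\lambda|\ge C_0>0$, then argues by induction that $\Psi(\lambda_k)\ge\lambda_k$ for every $k$, so that $\{\lambda_k\}$ is non-decreasing and bounded above by $1/(b_0\sigma)$, hence convergent, with the limit necessarily the unique fixed point. Note, however, that the paper's inductive step invokes $\Psi(\lambda_{k+1})-\Psi(\lambda_k)\ge C_0\left|\lambda_{k+1}-\lambda_k\right|\ge 0$, which treats $\Psi$ as non-decreasing; your (correct) observation that $\Psi$ is strictly decreasing means that with $\lambda_{k+1}>\lambda_k$ this difference is negative, so making that induction rigorous requires an \emph{upper} Lipschitz bound $|\Psi'|\le\alpha/(1-\alpha)$ --- precisely your benign case. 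In other words, both your contraction route and a repaired version of the paper's monotone route hinge on the same missing ingredient: a proven upper bound on $|d\Psi/d\lambda|$ over the interval the iterates visit. Until that estimate is established (or stated explicitly as an assumption), your proposal has a genuine gap.
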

\begin{proof}
We will first prove that $|d\Psi\left( \lambda \right)/d\lambda| \geq C_0$ where $C_0$ is a given constant whose value we will state at the end of the proof
and $|.|$ represents the absolute value. Also, let $f(\gamma) = \left( b_0 + b_1 \gamma + \ldots + b_k \gamma^k \right)$ and 
$g(\gamma) = \sigma + \frac{n}{n-1} \gamma T_s$. Then, $b_0 \leq f(\gamma) \leq \sum_{i=0}^k b_i$ and $\sigma \leq g(\gamma) \leq \sigma + \frac{n}{n-1} T_s $
as $0 \leq \gamma \leq 1$. Finally, $b_1 \leq df(\gamma)/d\gamma \leq \sum_{i=1}^k b_i$ and $dg(\gamma)/d\gamma = \frac{n}{n-1} T_s$.
\begin{eqnarray}
& & |d\Psi\left( \lambda \right)/d\lambda| = |d\gamma/d\lambda| \frac{df(\gamma)/d\gamma g(\gamma) + f(\gamma) dg(\gamma)/d\gamma}{\left(f(\gamma) g(\gamma)\right)^2} \nonumber \\
\label{eqn:gl} & & \geq |d\gamma/d\lambda| \frac{b_1 \sigma + \frac{b_0 n T_s}{n-1}}{\left( \sum_{i=0}^k b_i \right)^2 \left( \sigma + \frac{n}{n-1} T_s \right)^2}. 
\end{eqnarray}
We next lower bound the value of $|d\gamma/d\lambda|$. Now, $|d\gamma/d\lambda| = \left(n-1\right)\left(1-\beta\right)^{n-2} |d\beta/d\lambda|$
and $|d\beta/d\lambda| = \sigma + \left( \sigma + \frac{n}{n-1} T_s \right)$ \newline $\left( \gamma + \lambda |d\gamma/d\lambda| \right)$
$\geq \frac{2 \sigma + \frac{n}{n-1} T_s}{\frac{n-1}{b_0 \sigma} \left( \sigma + \frac{n}{n-1} T_s \right) - 1} = C_2$. 
Since, $\beta \leq 1/b_0$, $|d\gamma/d\lambda| \geq \left(n-1\right)\left( 1-1/b_0 \right)^{n-2} C_2 = C_1$. 
Substituting in Equation (\ref{eqn:gl}), $|d\Psi\left( \lambda \right)/d\lambda| \geq C_0 = $ \newline $C_1 \frac{b_1 \sigma + \frac{b_0 n T_s}{n-1}}
{\left( \sum_{i=0}^k b_i \right)^2 \left( \sigma + \frac{n}{n-1} T_s \right)^2}$.

We will next prove that $\Psi\left( \lambda_k \right) \geq \lambda_k$. We will use induction. Since, it is assumed that $\Psi\left( \lambda_0 \right) > \lambda_0$,
we only need to prove that if $\Psi\left( \lambda_k \right) \geq \lambda_k$, then $\Psi\left( \lambda_{k+1} \right) \geq \lambda_{k+1}$.
\begin{eqnarray}
& & \Psi\left( \lambda_{k+1} \right) - \lambda_{k+1}  \nonumber \\
& & = \Psi\left( \lambda_{k+1} \right) - \left( \left( 1 - \alpha \right) \Psi\left( \lambda_k \right) + \alpha \lambda_k \right)  \nonumber \\
& & = \Psi\left( \lambda_{k+1} \right) - \Psi\left( \lambda_{k} \right) + \alpha \left( \Psi\left( \lambda_{k} \right) - \lambda_k \right)  \nonumber \\
\label{eqn:pr3} & & \geq C_0 \left| \lambda_{k+1} - \lambda_k \right| + \alpha \left( \Psi\left( \lambda_{k} \right) - \lambda_k \right) \\
& & = \left( C_0 \left( 1 - \alpha \right) + \alpha \right) \left( \Psi\left( \lambda_{k} \right) - \lambda_k \right) \geq 0. \nonumber
\end{eqnarray}
Note that Equation (\ref{eqn:pr3}) holds because $|d\Psi\left( \lambda \right)/d\lambda| = 
\frac{\Psi\left( \lambda_{k+1} \right) - \Psi\left( \lambda_{k} \right)}{\lambda_{k+1} - \lambda_{k}} \geq C_0$. 

Hence, $\frac{1}{\left( \sum_{i=0}^{k} b_i \right) \left( \sigma + \frac{n}{n-1} T_s \right)} \leq \lambda_k \leq \lambda_{k+1} \leq \frac{1}{b_0 \sigma}$. 
Then, the bounded and non-decreasing sequence must converge to the unique fixed point. 
\end{proof}

%

\end{document}